\newtheorem{theorem}{\textbf{Theorem}}
\newtheorem{lemma}{\textbf{Lemma}}
\newenvironment{proof}{{\it Proof.}\quad}{\hfill $\blacksquare$\par}
\def\BibTeX{{\rm B\kern-.05em{\sc i\kern-.025em b}\kern-.08em
    T\kern-.1667em\lower.7ex\hbox{E}\kern-.125emX}}
\let\fff= \scriptscriptstyle
\begin{document}
\title{Enhancing Covert Communication in Relay Systems Using Multi-Antenna Technique}
\author{He Zhu, Huihui Wu, Wei Su, Xiaohong Jiang

\thanks{H. Zhu and X. Jiang are with the School of System Information Science, Future University Hakodate, Hakodate 041-8655, Japan (emails: zhuhe0203@foxmail.com; jiang@fun.ac.jp).}
\thanks{Huihui Wu is with Beijing National Research Center for Information Science and Technology (BNRist), Department of Automation, Tsinghua University, Beijing 100084, China (email: hhwu1994@mail.tsinghua.edu.cn).}
\thanks{Wei Su is with the School of Electronic and Information Engineering, Beijing Jiaotong University, Beijing 100044, China (email: wsu@bjtu.edu.cn).}
}

\maketitle

\begin{abstract}

This paper exploits the multi-antenna technique to enhance the covert communication performance in a relay system, where a source S conducts covert communication with a destination D via a relay R, subjecting to the detections of transmissions in the two hops from a single-antenna warden W. To demonstrate the performance gain from adopting the multi-antenna technique, we first consider the scenario when S, R and D all adopt single antenna, and apply hypothesis testing and statistics theories to develop a theoretical framework for the covert performance modeling in terms of detection error probability (DEP) and covert throughput. We then consider the scenario when S, R and D all adopt multiple antennas, and apply the hypothesis testing, statistics and matrix theories to develop corresponding theoretical framework for performance modeling. We further explore the optimal designs of the target rate and transmit power for covert throughput maximization under above both scenarios, subjecting to the constraints of covertness, reliability and transmit power. To solve the optimization problems, we employ Karushi-Kuhn-Tucker (KKT) conditions method in the single antenna scenario and a search algorithm in the multi-antenna scenario. Finally, we provide extensive numerical results to illustrate how the multi-antenna technique can enhance the covert performance in two-hop relay systems.



\end{abstract}

\begin{IEEEkeywords}
Covert communication, Multi-antenna technique, two-hop relay system, physical layer security.
\end{IEEEkeywords}

\section{Introduction}
Covert communication, also known as low probability detection (LPD) communication, employs a variety of signal processing technologies (e.g., coding, modulation, waveform design, etc.) to achieve the secure transmission between legitimate users under a very low probability of being detected by a warden \cite{makhdoom2022comprehensive}\cite{fu2019modulation}. The covert communication has found many critical applications in government, military, unmanned aerial vehicle (UAV), Internet of Things networks, etc. The recent studies indicate that the two-hop relay systems are highly appealing for implementing covert communication \cite{wu2020covert,hu2017covert,hu2018covert,forouzesh2020covert,gao2021covert,su2020covert,sun2021covert}, since such systems can largely alleviate the constraints of transmit power, channel fading and path loss in conventional single-hop covert wireless communication system \cite{wang2020covert,hu2019covert,shahzad2020covert,ta2019covert}, and thus can significantly improve the performance and expend the coverage of covert communication.\par
By now, considerable research efforts have been dedicated to investigating the study of covert wireless communications in two-hop relay systems. The seminal work in \cite{wu2020covert} provides a fundamental study for covert wireless communication in two-hop relay systems and proves that covert rate in such systems also follows the square root law (SRL). The works in \cite{hu2017covert, hu2018covert} consider a relay system where the greedy relay uses the power allocated from source to transmit addition covert signal to the destination subject to the detection from the source, and respectively proposed rate and power control-based covert schemes. In \cite{forouzesh2020covert}, the authors consider an untrusted relay systems and reach the maximization covert performance. The works in \cite{gao2021covert,su2020covert} explore the covert communication through relay selection scheme of multiple relay systems. The work in \cite{sun2021covert} investigates the two-hop relay covert communication system where the relay in both full-duplex (FD) and half-duplex (HD) modes, and design a joint FD/HD mode where the relay can switch flexibly between these two modes to maximize the covert performance. Some recent works also explore the covert communications with the help of mobile UAV relays \cite{jiang2021covert,zhang2022uav} or with the assist of relays equipped with intelligent reflecting surface (IRS) \cite{chen2021enhancing}.\par

The above works help us understand the potential applications of two-hop relay systems in covert communications. It is notable that these works mainly focus on the simple scenario where each node adopts single antenna in covert wireless communication. However, some recent studies show that the multi-antenna technique can significantly improve the covert communication performance in single-hop systems in terms of detection error probability (DEP), covert rate, covert throughput, etc \cite{zheng2019multi,yang2019achieving,chen2021multi,lin2021multi}. Thus, a natural question is that is it possible to further enhance the covert performance by resorting to multi-antenna technique in the two-hop relay systems. To address this problem, this paper, for the first time, provides solid theoretical frameworks to investigate the impact of adopting multi-antenna technique on covert communication performance in a two-hop relay system.\par
The main contributions of this paper are summarized as follows.
\noindent{}
\begin{itemize}
\item{We investigate the covert communication in a two-hop relay system, where a source S conducts covert communication with a destination D via a decode and forward (DF) relay R, subjecting to the detections of transmissions in the two hops from a single-antenna warden W. To demonstrate the performance gain from adopting the multi-antenna technique, we consider two scenarios where S, R and D all adopt single antenna and multiple antennas, respectively.}
\item{Under the single-antenna scenario, we apply the hypothesis testing and statistics theories to develop a theoretical framework for the covert performance modeling in terms of DEP and covert throughput. We further apply the hypothesis testing, statistics and matrix theories to develop a theoretical framework for corresponding performance modeling under the multi-antenna scenario.}
\item{We explore the optimal designs of the target rate and transmit power for covert throughput maximization under above both scenarios, subjecting to covertness, reliability and transmit power constraints. To solve the corresponding optimization problems, we employ KKT conditions method in the single antenna scenario and develop a classic search algorithm in the multi-antenna scenario. }
\item{We provide extensive numerical results to illustrate the potential impact and enhancement of adopting multi-antenna technique on covert communication performance in two-hop relay systems.}
\end{itemize}

The remainder of the paper is organized as follows. Section II presents the system model and preliminaries of transmission schemes and detection method for warden. Section III develops theoretical models on DEP and covert throughput under both single-antenna and multi-antenna scenarios. The covert throughput maximization problems and related alternative search algorithm are discussed in Section IV, and comprehensive numerical results are provided in Section V. Finally, Section VI provides the conclude for this work.\par
\textit{Notations:} Scalars are presented as lower-case letters. Vector are presented by bold lower-case letters, while matrices are presented by bold upper-case letters. For a given vector, $(\cdot)^{H}$ represents the operation of conjugate transpose, $(\cdot)^{T}$ denotes the transpose operation, and ${ \parallel \cdot \parallel }$ represents the norm of the vector. $\mathbb E(\cdot)$ denotes expectation operator.

\section{System Model and Preliminaries}
As shown in Fig.1, we consider a two-hop decode-and-forward (DF) relay system, where a source $S$ sends covert message to a destination $D$ via a relay $R$ in the presence of an illegal detection warden $W$. It is assumed that all nodes adopt the HD mode and there is no direct transmission link between $S$ and $D$ due to channel fading and obstructions such as large buildings, so the transmission from $S$ to $D$ requires two time slots (i.e., $S \rightarrow R$ and $R \rightarrow D$ transmissions). In the first time slot, $S$ transmits signal to $R$ while $D$ remains silent. In the second time slot, $R$ decodes and forwards the received signal to $D$. We use the quasi-static Rayleigh fading channel to model the channels, where the channel fading coefficients are constant within a single time slot, but change independently and randomly from time slot to time slot.\par
For the considered two-hop DF relay system, we consider both the single-antenna and the multi-antenna scenario depicted in Fig. 1(a) and Fig. 1(b), respectively. In the single-antenna scenario, all nodes are assumed to be equipped with one omnidirectional antenna. The coefficient of the channel from $I$ to $J$ is denoted as $h_{\fff IJ}$, where $IJ \in \{SR, RD, SW, RW\}$. In the multi-antenna scenario, $W$ is assumed to be equipped with a single antenna, while $S$ and $D$ are assumed to be equipped with $N_{\fff S}$ and $N_{\fff D}$ antennas, $R$ is equipped with $N_{\fff R_{r}}$ receiving antennas and $N_{\fff R_{t}}$ transmitting antennas, respectively. The channel matrixes of $S$-$R$, $R$-$D$, $S$-$W$ and $R$-$W$ are denoted as $ \mathbf{H}_{\fff SR} \in  \mathbb{C}^{N_{\fff R_{r}} \times N_{\fff S}}$, $\mathbf{H}_{\fff RD} \in  \mathbb{C}^{N_{\fff D} \times N_{\fff R_{t}}}$, $ \mathbf{H}_{\fff SW} \in  \mathbb{C}^{1 \times N_{\fff S}}$ and $ \mathbf{H}_{\fff RW} \in  \mathbb{C}^{1 \times N_{\fff R_{t}}}$, respectively. 


\begin{figure}[h]
    \centering
    \subfigure[Single-antenna scenario]{
        \includegraphics[width=0.5\textwidth]{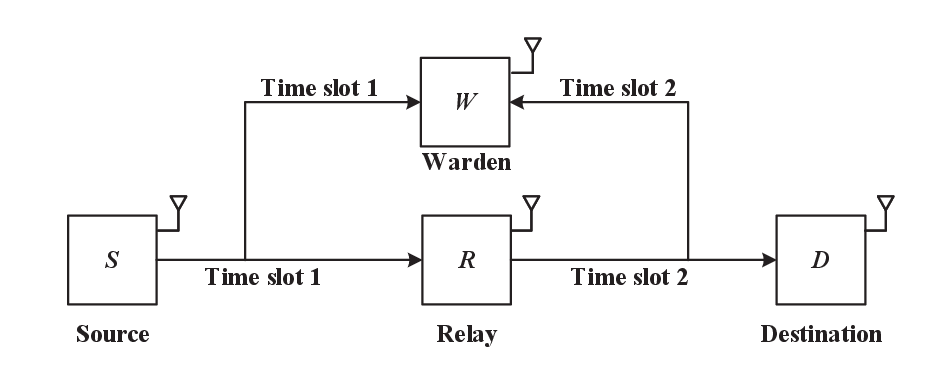}
    }
      \quad    
    \subfigure[Multi-antenna scenario]{
	\includegraphics[width=0.5\textwidth]{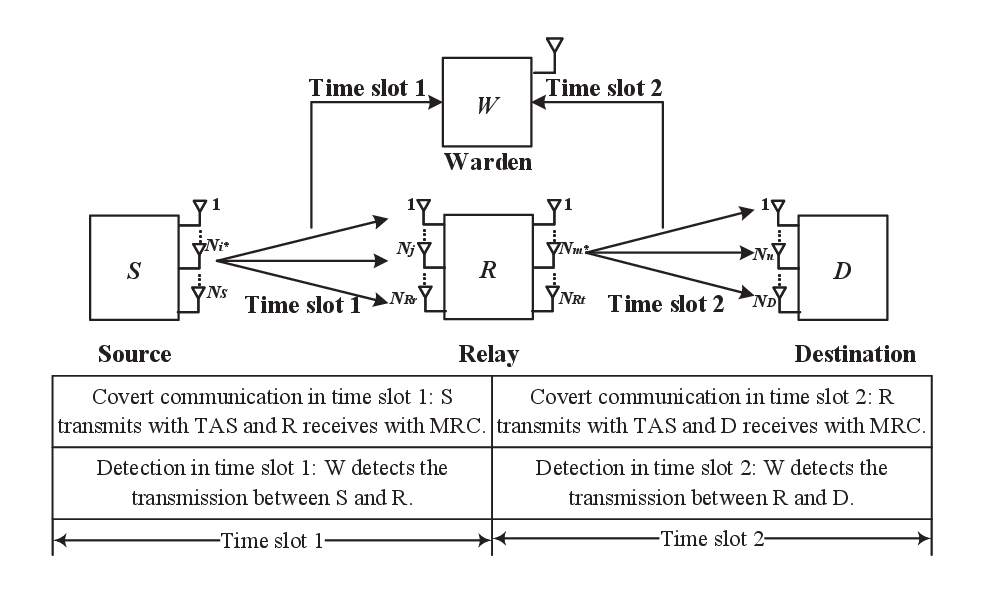}
    }
    \caption{Two-hop relay systems.}
    \label{fig.0}
\end{figure}

\subsection{Transmission Scheme in Single-antenna Scenario}

\textbf{Transmission in the first time slot:} Considering that $S$ encodes the signal into $N$ symbols and transmits them in the first time slot. Let $x_{\fff S}(k)$ denote the $k$-th symbol transmitted from $S$ to $R$, here $x_{\fff S}(k)$ satisfies the unit power constraint $\mathbb E[\mid{x_{\fff S}(k)}\mid^2] = 1$ and $k = 1,2, \cdots, N$. Thus, for the symbol $x_{\fff S}(k)$, the corresponding signal $ y_{\fff R}^{(1)}(k)$ received at $R$ can be determined as
 \begin{equation}
\begin{aligned}
y_{\fff R}^{(1)}(k)=  \sqrt{P_{\fff S}}{h}_{\fff SR} x_{\fff S}(k) + n_{\fff SR}(k),\label{s-SR}
\end{aligned}
\end{equation}
where $P_{\fff S}$ is the transmit power of $S$ subject to transmit power constraint $ P_{max}$; $n_{\fff SR}(k)$ is the additive white Gaussian noise (AWGN) at $ R$ with zero mean and variance $\sigma _{\fff SR}^2$, i.e., $n_{\fff SR}(k)\sim \mathcal{CN}({0}, \sigma _{\fff SR}^2)$.

\textbf{Transmission in the second time slot:} After receiving the signal $y_{\fff R}^{(1)}(k)$, $R$ then decodes and forwards it to $D$. Let $x_{\fff R}(k)$  denote the $k$-th symbol transmitted from $R$ to $D$, here $\mathbb E[\mid{x_{\fff R}(k)}\mid^2] = 1$ and $k= 1,2,\cdots, N$. For the symbol $x_{\fff R}(k)$, the corresponding signal $y_{\fff D}^{(2)}(k)$ received at $D$ can be determined as
 \begin{equation}
\begin{aligned}
y_{\fff D}^{(2)}(k)=  \sqrt{P_{\fff R}}{h}_{\fff RD} x_{\fff R}(k) + n_{\fff RD}(k),\label{s-RD}
\end{aligned}
\end{equation}
where $P_{\fff R}$ is the transmit power of $R$ subject to the transmit power constraint $ P_{max}$; $n_{\fff RD}(k)$ is the AWGN at $D$ satisfying $n_{\fff RD}(k)\sim \mathcal{CN}({0}, \sigma _{\fff RD}^2)$, and $ \sigma _{\fff RD}^2$ is the noise variance of the receiving antenna at $D$.

\subsection{Transmission Scheme in Multi-antenna Scenario}
For the multi-antenna scenario, we consider the typical transmit antenna selection (TAS) scheme for signal transmission and the maximal-ratio combining (MRC) scheme for receiving signal \cite{chen2005analysis}.\par
\textbf{Transmission in the first time slot:} 
 The  channel matrix $\mathbf{H}_{\fff  {SR}}$ of $S\rightarrow R$ transmission is a $N_{\fff R_{r}} \times N_{\fff S}$ matrix defined as 
\begin{equation}
\begin{aligned}
\mathbf{H}_{\fff  {SR}}=
 \begin{bmatrix}
h_{\fff 1,1} & \cdots & h_{\fff 1,N_{\fff S}}\\
\cdots & \cdots & \cdots \\ 
h_{\fff N_{\fff R_{r}},1} & \cdots & h_{\fff N_{\fff R_{r}},N_{\fff S}}
 \end{bmatrix},
\end{aligned}
\end{equation}
where $h_{ji}$ is the fading coefficient from the $i$-th antenna at $S$ to the $j$-th antenna at $R$, $i\in \{1,...N_{\fff S} \}$, $j\in \{1,...N_{\fff R_{r}} \}$. Under the TAS transmission scheme, the antenna with the highest instantaneous signal-to-noise ratio (SNR) will be selected from $N_{\fff S}$ antennas at $S$ to transmit signals from $S$ to $R$. We denote the index of the selected antenna at $S$ as $i^{*}$, and denote the channel vector of the antenna $i^*$ for $S \rightarrow R$ transmission as $ \mathbf{h}_{\fff i^{*}R}=(h_{\fff 1,i^{*}}, h_{\fff 2,i^{*}},\cdots ,  h_{\fff N_{\fff R_{r}},i^{*}})^\mathrm{T}$. We consider that the signal $x_{\fff S}=\{x_{\fff S}(1), x_{\fff S}(2),\cdots,x_{\fff S}(N)\}$ transmitted from $S$ to $D$ is consisted of $N$ symbols, where symbol $x_{\fff S}(k)$ satisfies the condition of $\mathbb E[|x_{\fff S}(k) |^2]=1$, $k=1,2,\cdots, N$. For the symbol $x_{\fff S}(k)$, the corresponding signal vector $\mathbf y_{\fff R}^{(1)}(k)$ received by $N_{\fff R_{r}}$ antennas at $R$ is given by
\begin{equation}
\begin{aligned}
\mathbf y_{\fff R}^{(1)}(k)=  \sqrt{P_{\fff S}}\mathbf{h}_{\fff i^{*} R} x_{\fff S}(k) +\mathbf n_{\fff SR},\label{v_yr}
\end{aligned}
\end{equation}
where $P_{\fff S}$ is the transmit power of the selected antenna $i^{*}$; $\mathbf n_{\fff SR}$ is the $N_{\fff R_{r}}\times 1$ AWGN vector at $R$ satisfying $\mathbb E[\mathbf n_{\fff SR}\mathbf n_{\fff SR}^H]=\mathbf I_{N_{\fff R_{r}}}\sigma_{\fff {SR}}^2$, and $\sigma_{\fff SR}^2$ is the noise variance of a receiving antenna.

Based on $\mathbf y_{\fff R}^{(1)}(k)$, the relay $R$ then applies the MRC scheme to receive the signal of symbol $x_{\fff S}(k)$. With the MRC scheme, $R$ combines the signals from all $N_{\fff R_{r}}$ antennas by weighting each signal according to its SNR to yield the maximum SNR ratio. Thus, for the symbol $x_{\fff S}(k)$, the final scalar signal ${y}_{\fff R}^{(1)}(k)$ received at $R$ can be determined as 
\begin{equation}
\begin{aligned}
{y}_{\fff R}^{(1)}(k)= \mathbf{w}_1\mathbf{y}_{\fff R}^{(1)}(k)=\mathbf{w}_1 \sqrt{P_{\fff S}}\mathbf{h}_{\fff i^{*} R} x_{\fff S}(k) + \mathbf{w}_1 \mathbf{n}_{\fff SR}\label{yr_mrc},
\end{aligned}
\end{equation}
where $\mathbf w_1= {\mathbf h_{\fff i^{*} R}^{H}} / { \parallel \mathbf h_{\fff i^{*} R} \parallel}$ is the weighting vector at $R$.

\textbf{Transmission in the second time slot:} After receiving the scalar signal ${y}_{\fff R}^{(1)}(k)$, $R$ then decodes and forwards it to $D$. Under the TAS scheme, we denote the index of the selected antenna from $N_{\fff R_{t}}$ antennas at $R$ as $m^{*}$, where $m^*\in \{1,\cdots, N_{\fff R_{t}}\}$, and denote the channel vector of the antenna $m^*$ for $R\rightarrow D$ transmission as $ \mathbf{h}_{\fff m^{*}D}=(h_{\fff 1,m^{*}}, h_{\fff 2,m^{*}}, \cdots, h_{\fff {N_{\fff D}},m^{*}})^\mathrm{T}$, where $h_{nm^{*}}$ is the fading coefficient from the $m^{*}$ antenna to the $n$-th antenna at $D$, $n\in \{1,...N_{\fff D} \}$. Let $x_{\fff R}(k)$ denote the $k$-th symbol transmitted from $R$ to $D$, here $\mathbb E[|x_{\fff R}(k)|^2]=1$ and $k=1,2,\cdots, N$. Again, $D$ will apply the MRC scheme to receive the signal from $R$. Following an approach similar to (\ref{v_yr}) and (\ref{yr_mrc}), we know that for the symbol $x_{\fff R}(k)$, the corresponding scalar signal ${y}_{\fff D}^{(2)}(k)$ received at $D$ can be determined as 

\begin{equation}
\begin{aligned}
{y}_{\fff D}^{(2)}(k)=\mathbf w_2 \sqrt{P_{\fff R}}\mathbf h_{\fff m^{*}D} x_{\fff R}(k) + \mathbf{w}_2 \mathbf{n}_{\fff RD},\label{yd_mrc}
\end{aligned}
\end{equation}
where $\mathbf w_2= {\mathbf h_{\fff m^{*}D}^{H}}/{ \parallel \mathbf h_{\fff m^{*}D} \parallel } $ is the weighting vector at $D$; $P_{\fff R}$ is the transmit power of the selected antenna $m^{*}$; $\mathbf n_{\fff RD}$ is the $N_{\fff D}\times 1$ AWGN vector at $D$ satisfying $\mathbf n_{\fff RD} \sim \mathcal{CN}(\mathbf{0}, \sigma _{\fff RD}^2{\mathbf I_{N_{\fff D}}})$, and $\sigma_{\fff RD}^2$ is the noise variance at $D$. \par

\subsection{Detection Scheme at Warden}
Same as \cite{shahzad2017covert}, this work considers that $W$ adopts the energy detection method in each time slot to detect the covert transmission between $S$ and $D$. More specifically, in each time slot, $W$ adopts a hypothesis test based on his received signals to make a binary decision, where the null hypothesis $H_0$ means that there is no communication and the alternative hypothesis $H_1$ indicates that the communication exists.\par

\textbf{Detection in the first time slot:} To detect the existence of the covert transmission, $W$ needs to distinguish the following hypothesis. When $S$ does not send the symbol $x_{\fff S}(k)$ to $R$, \text{$H_0$} is true; When $S$ has sent the symbol $x_{\fff S}(k)$ to $R$, \text{$H_1$} is true. Under the binary hypothesis test, for the symbol $x_{\fff S}(k)$, the corresponding  signal $\text y_{\fff W}^{(1)}(k)$ received at $W$ is given by
\begin{equation}
\begin{aligned}
\text y_{\fff W}^{(1)}(k)= 
\begin{cases}
n_{\fff W}(k), & \text{if $H_0$ is true}\\ 
\!\!\sqrt{P_{\fff S}}h_{\fff SW}x_{\fff S}(k)+n_{\fff W}(k),  & \text{if $H_1$ is true,}\label{1-w}
\end{cases}
\end{aligned}
\end{equation}
where $n_{\fff W}(k)$ is the AWGN at $W$ with zero mean and variance $\sigma_{\fff W}^2$, i.e., $ n_{\fff W}(k) \sim \mathcal{CN}(0, \sigma_{\fff W}^2)$. We consider that $W$ lacks knowledge of the exact noise power (i.e., noise uncertainty), which arises due to the factors such as the variance of environment, temperature fluctuations, and calibration error. Thus, a commonly used bounded uncertainty model is adopted, where the noise power is assumed to exist within a finite range around the nominal noise power. As described in \cite{he2017covert}, the noise power $\sigma_{\fff W}^2$ at $W$ follows a logarithmic uniform distribution, and the probability density function (PDF) of  $\sigma_{\fff W}^2$ can be determined as
 \begin{equation}
\begin{aligned}
f_{ \sigma_{\fff W}^{2}(x)}=  \begin{cases} \frac{1}{2x\ln (\rho) } , & \text{if}\  \frac{\sigma_n^2}{\rho}\leq x\leq \rho\sigma_n^2 \\0, & \text{otherwise},\end{cases}\label{pdf-nw}
\end{aligned}
\end{equation}
where $\rho>1$ is the parameter that quantifies the size of the uncertainty; $ \sigma_n^2$ is the nominal noise power.\par
\textbf{Detection in the second time slot:} Similarly, when $R$ does not transmit the symbol $x_{\fff R}(k)$ to $D$, \text{$H_0$} is true; When $R$ has transmitted the symbol $x_{\fff R}(k)$ to $D$, \text{$H_1$} is true. Under the binary hypothesis test, for the symbol $x_{\fff R}(k)$, the corresponding signal $\text y_{\fff W}^{(2)}(k)$ received at $W$ can be given by
\begin{equation}
\begin{aligned}
\text y_{\fff W}^{(2)}(k)= 
\begin{cases}
n_{\fff W}(k), & \text{if $H_0$ is true}\\ 
\!\!\sqrt{P_{\fff R}}h_{\fff RW}x_{\fff R}(k)+n_{\fff W}(k),  & \text{if $H_1$ is true.}\label{2-w}
\end{cases}
\end{aligned}
\end{equation}

\textbf{Decision in two-hop relay system:} Based on the observations, $W$ use a power detector (e.g., radiometer) to decide whether the transmission exists or not. Thus, the decision rule in each time slot is given by
\begin{equation}
\begin{aligned}
\overline{P}_{\fff W}^{(1,2)}\triangleq  \frac{1}{N}  \sum\limits_{k=1}^N \mid \text y_{\fff W}^{(1,2)}\mid ^2 {\mathop{\gtrless}\limits^{\mathcal{D}_1}_{\mathcal{D}_0}} \tau\label{tau},
\end{aligned}
\end{equation}
where $\overline{P}_{\fff W}^{(1,2)}$ denotes the average power of received signals at $W$; $\tau$ denotes the detection threshold; $\mathcal{D}_1$ and $\mathcal{D}_0$ are the binary decisions that imply whether the transmission exists or not, respectively. According to the Lebesgue's dominated convergence theorem, the average power of the received signal $\text y_{\fff W}$ in (\ref{1-w}) or (\ref{2-w}) can be rewritten by
\begin{equation}
\begin{aligned}
\overline{P}_{\fff W}^{(1,2)}= 
\begin{cases}
\sigma_{\fff W}^2, & \text{if $H_0$ is true}\\ 
\!\!{P_{\fff I}}\mid h_{\fff IW}\mid ^2 +\sigma_{\fff W}^2,  & \text{if $H_1$ is true,}\label{p-w}
\end{cases}
\end{aligned}
\end{equation}
where $P_{\fff I}$ is the transmit power at $S$ or $R$, and $I\in \{S,R\}$.

In general, the hypothesis test introduces two types of detection errors. One is false alarm (FA), which means that $W$ makes the decision of $\mathcal{D}_1$ while the transmission does not exist. The other is missed detection (MD), which means that $W$ makes the decision of $\mathcal{D}_0$ while the transmission exists. We denote the probabilities of FA and MD as $\mathcal{P}_{FA}$ and $\mathcal{P}_{MD}$, then the total DEP $\mathcal{P}_e$ is given by
\begin{equation}
\begin{aligned}
\mathcal{P}_e= \mathcal{P}_{FA} +  \mathcal{P}_{MD}.\label{dep}
\end{aligned}
\end{equation}
Note that the detection performance at $W$ becomes worse as $\mathcal{P}_e$ approaches to 1, so the constraint of $\mathcal{P}_e$ can be determined as
\begin{equation}
\begin{aligned}
\mathcal{P}_e   \geq  1-\epsilon,\label{epsilon}
\end{aligned}
\end{equation}
here $\epsilon$ represents the covertness constraint. For a given covertness constraint in terms of  $\epsilon >0$, a covert communication is achieved if (\ref{epsilon}) holds  \cite{bash2013limits}.
 
 To obtain the final detection result in the two-hop relay system, $W$ combines the detection result in each time slot. Thus, the total DEP $\xi$ in the two-hop relay system is  given by

\begin{equation}
\begin{aligned}
\xi = 1-(1-\mathcal{P}_e^{(1)})(1-\mathcal{P}_e^{(2)}),\label{2-dep}
\end{aligned}
\end{equation}
where $\mathcal{P}_e^{(1)}$ and $\mathcal{P}_e^{(2)}$ are the DEP in the first and second time slots, respectively.

We note that the detection scheme at $W$ is the same in both the multi-antenna scenario and single-antenna scenario. Therefore, under the TAS scheme, $W$ cannot obtain the additional antenna gain from the selected $i^*$ antenna of $S$ or the $m^*$ antenna of $R$. For the multi-antenna scenario, we denote the fading coefficients of the $i^*$ antenna for  $S \rightarrow W$ transmission and the $m^*$ antenna for $R\rightarrow W$ transmission as $h_{\fff {i^{*}}W}$ and $h_{\fff {m^{*}}W}$, respectively. Thus, the expressions of the received signal at $W$ will be the same as (\ref{1-w}) and (\ref{2-w}) by replacing $h_{\fff SW}$ and $h_{\fff RW}$ with $h_{\fff {i^{*}}W}$ and $h_{\fff {m^{*}}W}$, respectively.


\section{Covert Performance Analysis }
This section provides the theoretical analysis for the DEP and covert throughput of the concerned two-hop relay system under single-amtenna and multi-antenna scenarios, respectively.
\subsection{ DEP in Single-antenna and Multi-antenna Scenarios}
As per (\ref{2-dep}), to determine DEP in the single/multiple antenna scenarios, we need to analyze $\mathcal{P}_e^{(1)}$ and $\mathcal{P}_e^{(2)}$ in the first and second time slots, respectively. We first derive $\mathcal{P}_e^{(1)}$ in the first time slot. Based on the definition of $\mathcal{P}_e$ in (\ref{dep}), we need to derive $\mathcal{P}_{FA}^{(1)}$ and $\mathcal{P}_{MD}^{(1)}$, which are given by
\begin{equation}
\begin{aligned}
\mathcal{P}_{FA}^{(1)}=\mathbb{P}(D_1\mid H_0)=\mathbb{P}(\overline{P}_{\fff W}^{(1)}> \tau \mid H_0),\label{pfa}
\end{aligned}
\end{equation}
and
\begin{equation}
\begin{aligned}
\mathcal{P}_{MD}^{(1)}=\mathbb{P}(D_0\mid H_1)=\mathbb{P}(\overline{P}_{\fff W}^{(1)} \leq \tau \mid H_1).\label{pmd}
\end{aligned}
\end{equation}
Substituting (\ref{p-w}), (\ref{pfa}) and (\ref{pmd}) into (\ref{dep}), the DEP $\mathcal{P}_e^{(1)}$ can be determined as
\begin{equation}
\begin{aligned}
\mathcal{P}_e^{(1)} = 1- \mathbb{P}(  \tau-{P_{\fff S}}\mid h_{\fff SW}\mid ^2 \leq \sigma_{\fff W}^2 < \tau).\label{pe_1}
\end{aligned}
\end{equation}
Let $ P_{\fff w}^{(1)}=P_{\fff S}|h_{\fff SW}|^2$ denote the power of the received signal at $W$ in the first time slot when $H_1$ is true, then the PDF of $P_{\fff w}^{(1)}$ can be determined as
\begin{equation}
\begin{aligned}
{f_{P_{\fff w}^{(1)}}}(x)= \frac{1}{{P_{\fff S}}} \exp (- \frac{x}{P_{\fff S}}  ).\label{pdf-pw1}
\end{aligned}
\end{equation}\par
Thus, according to (\ref{pdf-np}), (\ref{pe_1}) and (\ref{pdf-pw1}), $\mathcal{P}_e^{(1)}$ of $W$  in the first time slot can be given by the following lemma.\par

\begin{lemma}
For the concerned two-hop relay system with transmit power $P_{\fff S}$ for $S$, noise uncertainty $\rho$ and nominal noise power $\sigma_n^2$ of the system, for a given detection threshold $\tau$ at $W$, the DEP ${\mathcal{P}_e}^{(1)} $ in the first time slot of single/multiple antenna scenarios can be given by (\ref{pe1}), where $\mu_1=\frac{\sigma^2_n}{\rho}$, $\mu_2=\rho \sigma^2_n$, $\upsilon_1(\tau)= -\frac {\tau-\mu_1}{ P_{\fff S}}$, $ \upsilon_2(\tau) = -\frac{\tau-\mu_2}{P_{\fff S}}$. Here, $ \mathbf{Ei}( \cdot )$ is the  exponential integral function and $ \mathbf{Ei}(x)=- \int_{-x}^{\infty}  \frac{e^{-t}}{t}dt,  (x<0).$
\begin{figure*}[h!]
 \normalsize
	\vspace*{4pt}
 \begin{equation}
 \begin{aligned}
{\mathcal{P}_e}^{(1)}(\tau)=
\begin{cases}
1, &\text{if}\  \tau \leq\mu_1 \\
 1\!-\!\frac{1}{2\ln(\rho)}\! \left[\ln(\mu_2-\tau)+e^{-\frac{\tau}{P_{\fff S}}}\left( \mathbf{Ei}(\frac{\mu_1}{P_{\fff S}})-\mathbf{Ei}(\frac{\tau}{P_{\fff S}})\right)\ln(\frac{\tau}{\mu_1}) \right], &\text{if}\ \mu_1  < \tau \leq \mu_2 \\
1\!-\!\frac{1}{2\!\ln(\rho)}\left[ e^{-\frac{\tau}{P_{\fff S}}} \left( \mathbf{Ei}(\frac{\mu_1}{P_{\fff S}}+\mathbf{Ei}(\frac{\mu_2}{P_{\fff S}})-e^{\frac{\mu_1}{P_{\fff S}}}\ln{(\mu_1)}+e^{\frac{\mu_2}{P_{\fff S}}}\ln{(\mu_2)}\right)+1\right], &\text{if}\  \tau > \mu_2.
\end{cases} 
\label{pe1}
\end{aligned}	 
\end{equation}
	\hrulefill
	\vspace*{4pt}
\end{figure*}
\end{lemma}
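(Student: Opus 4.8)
The plan is to start from the already-derived expression \eqref{pe_1}, namely $\mathcal{P}_e^{(1)} = 1 - \mathbb{P}(\tau - P_{\fff w}^{(1)} \leq \sigma_{\fff W}^2 < \tau)$, and evaluate the probability by integrating the joint density of the two independent random variables $P_{\fff w}^{(1)}$ (exponential with mean $P_{\fff S}$, PDF \eqref{pdf-pw1}) and $\sigma_{\fff W}^2$ (log-uniform on $[\mu_1,\mu_2]$, PDF \eqref{pdf-nw}). First I would write $\mathbb{P}(\tau - P_{\fff w}^{(1)} \leq \sigma_{\fff W}^2 < \tau) = \int \int_{\{\tau - x \leq s < \tau\}} f_{P_{\fff w}^{(1)}}(x)\, f_{\sigma_{\fff W}^2}(s)\, ds\, dx$, and then split into the three regimes of $\tau$ relative to the support endpoints $\mu_1 = \sigma_n^2/\rho$ and $\mu_2 = \rho\sigma_n^2$, since the region of integration intersects the strip $[\mu_1,\mu_2]$ differently in each case.

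In the regime $\tau \leq \mu_1$, the event $\sigma_{\fff W}^2 < \tau$ is impossible (the support of $\sigma_{\fff W}^2$ starts at $\mu_1 \geq \tau$), so the probability is zero and $\mathcal{P}_e^{(1)} = 1$. In the middle regime $\mu_1 < \tau \leq \mu_2$, I would integrate $s$ from $\max(\mu_1, \tau - x)$ up to $\min(\mu_2,\tau) = \tau$; the inner integral of $\frac{1}{2 s \ln\rho}$ over $s$ gives a logarithm, and the switch in the lower limit between $\mu_1$ and $\tau - x$ happens at $x = \tau - \mu_1$, so the outer $x$-integral splits at that point into one piece where the $s$-integral contributes $\ln(\tau/\mu_1)$ (constant in $x$) times $\int_{\tau-\mu_1}^{\infty} \frac{1}{P_{\fff S}} e^{-x/P_{\fff S}}\,dx$, and another piece over $x \in (0, \tau-\mu_1]$ where the $s$-integral contributes $\ln\!\big(\tau/(\tau-x)\big)$. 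The second piece is where the exponential integral enters: after the substitution $u = \tau - x$ the integrand becomes proportional to $e^{-(\tau-u)/P_{\fff S}} \ln(\tau/u)$, and $\int e^{u/P_{\fff S}} \ln(\tau/u)\, du$ (or an integration by parts moving the log onto an $e^{u/P_{\fff S}}/u$ term) produces exactly the $\mathbf{Ei}$ terms. Assembling both pieces and simplifying yields the stated middle branch of \eqref{pe1}. In the regime $\tau > \mu_2$, the upper limit $\min(\mu_2,\tau)$ becomes $\mu_2$, the lower limit is $\max(\mu_1, \tau - x)$ which equals $\mu_1$ once $x \geq \tau - \mu_1$ (which includes the whole relevant range when $\tau - \mu_2 \leq x$), so one instead splits the $x$-integral at $x = \tau - \mu_2$ and at $x = \tau - \mu_1$; carrying out the same type of $s$-integration and then the $\mathbf{Ei}$-producing $x$-integration over the finite sub-interval gives the third branch.

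The main obstacle I anticipate is purely the bookkeeping of the nested-integral limits: correctly identifying, in each $\tau$-regime, where the piecewise lower limit $\max(\mu_1,\tau-x)$ and upper limit $\min(\mu_2,\tau)$ of the $s$-integral change as $x$ varies over $(0,\infty)$, and then splitting the $x$-integral accordingly before invoking the definition $\mathbf{Ei}(x) = -\int_{-x}^{\infty} \frac{e^{-t}}{t}\,dt$. A secondary technical point is matching the sign conventions and arguments: the variables $\upsilon_1(\tau) = -(\tau-\mu_1)/P_{\fff S}$ and $\upsilon_2(\tau) = -(\tau-\mu_2)/P_{\fff S}$ are defined with a minus sign so that they are negative in the relevant regime, which is required for $\mathbf{Ei}$ to be real-valued; I would keep careful track of these to land on the exact closed forms displayed in \eqref{pe1}, and as a sanity check verify continuity of $\mathcal{P}_e^{(1)}(\tau)$ at the breakpoints $\tau = \mu_1$ and $\tau = \mu_2$.
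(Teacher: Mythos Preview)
Your proposal is correct and follows essentially the same approach as the paper: start from \eqref{pe_1}, express the probability as a double integral in the independent variables $P_{\fff w}^{(1)}$ and $\sigma_{\fff W}^2$, split into the three regimes $\tau\le\mu_1$, $\mu_1<\tau\le\mu_2$, $\tau>\mu_2$, and in each regime partition the outer integral at the values of $x$ where the inner limits switch, producing the $\mathbf{Ei}$ terms. The only cosmetic difference is that the paper organizes the middle and last cases by writing the $\mathcal{P}_{FA}$ and $\mathcal{P}_{MD}$ pieces separately before combining, whereas you integrate the single event $\{\tau-x\le\sigma_{\fff W}^2<\tau\}$ directly; the resulting partitions of the integration domain and the computations are the same.
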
\par

\begin{proof}
Based on (\ref{pe_1}), the DEP $\mathcal{P}_e^{(1)}$ can be determined as
\begin{equation}
\begin{aligned}
{\mathcal{P}_e}^{(1)} &=1- \int_{0}^{\infty}\int_{0}^{\infty} f_{\sigma_{\fff W}^2}(x)dxf_{P_{\fff w}^{(1)}}(y)dy.\label{pe1-2}
\end{aligned}
\end{equation}
From (\ref{pe1-2}) we can see that ${\mathcal{P}_e}^{(1)}$ is related to the detection threshold $\tau$, $\mu_{1}$ and $\mu_2$. Thus, the analysis of (\ref{pe1-2}) can be divided into the following three cases.\par
\emph{Case I} ${(\tau \leq \mu _{1}):}$ In this case, the detection threshold is less than the minimum background noise at $W$, so we have $\mathcal{P}_{FA}^{(1)}=1$, $\mathcal{P}_{MD}^{(1)}=0$. Thus, the DEP is given by
\begin{equation}
\begin{aligned}
 {\mathcal{P}_e}^{(1)}(\tau)=1.\label{case1}
\end{aligned}
\end{equation}

\emph{Case II} ($\mu_1  < \tau \leq \mu_2$): In this case, the detection threshold is between the two bounds of noise uncertainty, so FA and MD will occur. According to (\ref{pfa}) and (\ref{pmd}), the DEP can be determined as
\begin{flalign}
\begin{aligned}
{\mathcal{P}_e}^{(1)} (\tau) & =1- \left(\int_{0}^{\infty}\int_{\tau}^{\mu_2}f_{\sigma_{\fff W}^2}( x )f_{P_{\fff w}^{(1)}}(y )dxdy \right.\\
&\left. \quad\quad +\int_{0}^{\tau- \mu_1} \int_{\mu_1}^{\tau-y}f_{\sigma_n^2}(x)f_{P_{\fff w}^{(1)}}(y)dxdy\right)\\
&= \!1\!-\!\!\frac{1}{2\ln(\rho)}\! \left[\ln(\mu_2-\tau)+e^{-\frac{\tau}{P_{\fff S}}}\left( \mathbf{Ei}(\frac{\mu_1}{P_{\fff S}})\right.\right.\\
& \left.\left. \quad -\mathbf{Ei}(\frac{\tau}{P_{\fff S}})\right)\ln(\frac{\tau}{\mu_1}) \right]. \label{xi2}
 \end{aligned}
\end{flalign}


\emph{Case III} {($\tau>\mu_2$):} In this case, when a given detection threshold is larger than the maximum background noise $\mu_2$ at $W$, so FA and MD occur.  According to (\ref{pfa}) and (\ref{pmd}), the DEP is given by
\begin{flalign}
\begin{aligned}
 {\mathcal{P}_e}^{(1)} \!(\tau)\!\!& =1-\left(\int_{\tau-\!\mu_2}^{\tau-\mu_1} \int_{\mu_1}^{\tau-y} f_{\sigma_{\fff W}^2}(x)f_{P_{\fff w}^{(1)}}(y)dxdy \right.\\
 &\left. \quad\quad+  \int_{0}^{\tau-\mu_2} \int_{\mu_1}^{\mu_2} f_{\sigma_{\fff W}^2}(x)f_{P_{\fff w}^{(1)}}(y)dxdy\right)\\
&= \!\!1\!-\!\frac{1}{2\!\ln(\rho)}\left[ e^{-\frac{\tau}{P_{\fff S}}} \left( \mathbf{Ei}(\frac{\mu_1}{P_{\fff S}}+\mathbf{Ei}(\frac{\mu_2}{P_{\fff S}})-e^{\frac{\mu_1}{P_{\fff S}}}\ln{(\mu_1)}\right.\right.\\
&\left. \left. +e^{\frac{\mu_2}{P_{\fff S}}}\ln{(\mu_2)}\right)+1\right]. \label{xi3}
\end{aligned}
\end{flalign}
Finally, substituting (\ref{case1}), (\ref{xi2}), and (\ref{xi3}) into the ${\mathcal{P}_e}^{(1)}$ in (\ref{pe1-2}) completes the proof.
\end{proof}
Based on a proof similar to that \textbf{Lemma 1}, we can prove that the DEP ${\mathcal{P}_e}^{(2)}$ in the second time slot is also determined as (\ref{pe1}) by replacing $P_{\fff S}$ with $P_{\fff R}$ only.

Regarding the optimal detection threshold and corresponding the minimum DEP at warden, we have the following lemma.
\begin{lemma}
For both the single/multiple antenna scenarios of the concerned two-hop relay system with transmit power $P_{\fff S}$ for $S$, transmit power $P_{\fff R}$ for $R$, noise uncertainty $\rho$ and nominal noise power $\sigma_n^2$, the optimal detection threshold $\tau^*$, the optimal DEP ${\mathcal{P}_e}^{(1)*}$ in the first time slot and the optimal DEP ${\mathcal{P}_e}^{(2)*}$ in the second time slot are determined as
 \begin{equation}
\begin{aligned}
\tau^*=\rho \sigma^2_n.\label{op-tau}
\end{aligned}
\end{equation}

\begin{equation}
\begin{aligned}
\!\!{\mathcal{P}_e}^{(1)*}\!= \!1\!-\!\!\frac{1}{2\ln(\rho)}\! \left[e^{-\frac{\mu_2}{P_{\fff S}}}\!\left( \mathbf{Ei}(\frac{\mu_1}{P_{\fff S}}) \!-\!\mathbf{Ei}(\frac{\mu_2}{P_{\fff S}})\right)\!\ln(\frac{\mu_2}{\mu_1}) \!\right]\!.
\label{m-pe1}
\end{aligned}
\end{equation}

\begin{flalign}
\begin{aligned}
\!\!{\mathcal{P}_e}^{(2)*}\!= \!1\!-\!\!\frac{1}{2\ln(\rho)}\! \left[e^{-\frac{\mu_2}{P_{\fff R}}}\!\left( \mathbf{Ei}(\frac{\mu_1}{P_{\fff R}}) \!-\!\mathbf{Ei}(\frac{\mu_2}{P_{\fff R}})\right)\!\ln(\frac{\mu_2}{\mu_1}) \!\right]\!.
\label{2-pe}
\end{aligned}
\end{flalign}
\end{lemma}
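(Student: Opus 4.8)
The plan is to prove Lemma 2 by optimizing the piecewise expression for ${\mathcal{P}_e}^{(1)}(\tau)$ given in (\ref{pe1}) over the detection threshold $\tau$, and then simply restating the result for the second hop with $P_{\fff S}$ replaced by $P_{\fff R}$. Recall that $W$ seeks to \emph{minimize} the DEP (its best detection strategy), so $\tau^*$ is the minimizer of ${\mathcal{P}_e}^{(1)}(\tau)$; equivalently we maximize the bracketed term subtracted from $1$. First I would dispose of the three regions. On $\tau\le\mu_1$ we have ${\mathcal{P}_e}^{(1)}=1$, which is the worst possible value for covertness analysis but the best for $W$ only if the bracket can never be positive elsewhere — so this region cannot contain the minimizer unless the function is constant, which it is not. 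On $\tau>\mu_2$ the expression is a \emph{constant} in $\tau$ (inspect (\ref{xi3}): every $\tau$-dependence sits inside $e^{-\tau/P_{\fff S}}(\mathbf{Ei}(\mu_1/P_{\fff S})+\mathbf{Ei}(\mu_2/P_{\fff S})-\dots)$ — wait, it does depend on $\tau$ through the prefactor $e^{-\tau/P_{\fff S}}$). So the real work is a monotonicity argument on each of regions II and III, plus matching the boundary value at $\tau=\mu_2$.

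The key steps, in order, are: (i) show that on region III ($\tau>\mu_2$) the function ${\mathcal{P}_e}^{(1)}(\tau)$ is monotonically increasing in $\tau$ — the bracket carries the factor $e^{-\tau/P_{\fff S}}>0$ multiplying a quantity whose sign I must pin down (I expect $\mathbf{Ei}(\mu_1/P_{\fff S})+\mathbf{Ei}(\mu_2/P_{\fff S})-e^{\mu_1/P_{\fff S}}\ln\mu_1+e^{\mu_2/P_{\fff S}}\ln\mu_2$, scaled, to be such that increasing $\tau$ shrinks the bracket hence increases $\mathcal{P}_e$), so the infimum over region III is attained at its left endpoint $\tau=\mu_2$; (ii) show that on region II ($\mu_1<\tau\le\mu_2$) the function is monotonically decreasing in $\tau$, so again the minimum on the \emph{closed} region is attained at the right endpoint $\tau=\mu_2$; (iii) conclude $\tau^*=\mu_2=\rho\sigma_n^2$, giving (\ref{op-tau}); (iv) substitute $\tau=\mu_2$ into the region-II branch of (\ref{pe1}): the term $\ln(\mu_2-\tau)\to\ln 0=-\infty$ naively, but it is cancelled/regularized — actually at $\tau=\mu_2$ one must take the limit carefully, and the $\mathbf{Ei}(\tau/P_{\fff S})$ term combines with it; the clean surviving expression is $1-\frac{1}{2\ln\rho}\big[e^{-\mu_2/P_{\fff S}}(\mathbf{Ei}(\mu_1/P_{\fff S})-\mathbf{Ei}(\mu_2/P_{\fff S}))\ln(\mu_2/\mu_1)\big]$, which is (\ref{m-pe1}); (v) state (\ref{2-pe}) by the $P_{\fff S}\mapsto P_{\fff R}$ substitution justified by the remark following Lemma 1.

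The main obstacle I anticipate is step (iv) — the boundary evaluation at $\tau=\mu_2$. The region-II formula in (\ref{pe1}) contains $\ln(\mu_2-\tau)$, which diverges as $\tau\uparrow\mu_2$, and separately $\mathbf{Ei}(\tau/P_{\fff S})$ is finite there, so naively the branch is $-\infty$; the resolution must be that the divergent $\ln(\mu_2-\tau)$ is paired inside the original double integral (\ref{xi2}) with a compensating contribution, so the \emph{correct} $\tau\to\mu_2$ limit of ${\mathcal{P}_e}^{(1)}$ is finite and equals (\ref{m-pe1}). I would therefore not evaluate (\ref{pe1}) formally at $\tau=\mu_2$; instead I would return to the integral representation (\ref{pe1-2})–(\ref{xi2}), set $\tau=\mu_2$ there so that the region-II integral $\int_0^\infty\int_{\tau}^{\mu_2}(\cdot)\,dx\,dy$ vanishes and only $\int_0^{\mu_2-\mu_1}\int_{\mu_1}^{\mu_2-y}f_{\sigma_{\fff W}^2}(x)f_{P_{\fff w}^{(1)}}(y)\,dx\,dy$ remains, and recompute that single clean double integral to obtain $e^{-\mu_2/P_{\fff S}}(\mathbf{Ei}(\mu_1/P_{\fff S})-\mathbf{Ei}(\mu_2/P_{\fff S}))\ln(\mu_2/\mu_1)$ directly. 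The monotonicity claims in steps (i)–(ii) are then a routine sign check on a single derivative in $\tau$, using that $\mathbf{Ei}$ is negative and increasing on the negative axis and that $\mu_1<\mu_2$; I would flag but not belabor these.
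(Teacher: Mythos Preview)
Your approach is essentially identical to the paper's: both establish that ${\mathcal{P}_e}^{(1)}$ is decreasing on $(\mu_1,\mu_2]$ and increasing on $(\mu_2,\infty)$ via first-derivative sign checks, conclude $\tau^*=\mu_2$, and then substitute. Your caution about the $\ln(\mu_2-\tau)$ divergence is well-placed---this appears to be a typo in (\ref{pe1}) for $\ln(\mu_2/\tau)$, which vanishes cleanly at $\tau=\mu_2$; the paper simply substitutes $\tau^*$ into (\ref{xi2}) without comment, so your plan to recompute directly from the integral representation is the safer route.
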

\begin{proof}
We first focus on the proof for the first time slot. The minimum DEP ${\mathcal{P}_e}^{(1)*}$ of $W$ can be defined as
\begin{equation}
\begin{aligned}
 {\mathcal{P}_e}^{(1)*}=  \min \limits_{\tau^*}  {\mathcal{P}_e}^{(1)}(\tau).
\end{aligned}
\end{equation}
Note that the DEP in (\ref{pe1}) is a continuous bounded function. When $\mu_1<\tau\leq \mu_2 $, the first-order derivation of ${\mathcal{P}_e}^{(1)}$ in (\ref{xi2}) respect to $\tau$ is given by
 \begin{equation}
\begin{aligned}
{\mathcal{P}_e}^{(1)'}(\tau)=&\frac{1}{2\rho\tau(\tau-\mu_2)}\!\left[-e^{-\frac{\tau}{\rho}} \ln(\rho) \left((\mu_2-\tau) \left((\mathbf{Ei}(\frac{\mu_1}{P_{\fff S}})\right.\right.\right.\\
& \left.\left.\left. \!\!\!-\mathbf{Ei}(\!\frac{\tau}{P_{\fff S}}\!)\!)(\rho\!+\!\tau \ln (\!\frac{\tau}{\mu_1}\!)\!)\!-\!\rho e^{\frac{\tau}{P_{\fff S}}} \ln (\!\frac{\tau}{\mu_1}\!)\!\right)\!-\!\rho \tau e^{\frac{\tau}{\rho}}\!\right)\!\right].
\end{aligned}
\end{equation}
We define $f(x_1)=\mathbf{Ei}(x_1)$, then $f^{'}(x_1)=\frac{1}{x_1}e^{x_1}>0$ when $x_1>0$. Thus, $f(x_1)$ monotonically increases with $x_1$ when $x_1>0$. Since $\frac{\tau}{P_{\fff S}}>\frac{\mu_1}{P_{\fff S}}>0$, we know that $\mathbf{Ei}(\frac{\mu_1}{P_{\fff S}})-\mathbf{Ei}(\frac{\tau}{P_{\fff S}})<0$. Notice that $0<\mu_1<\tau<\mu_2$, then we have ${\mathcal{P}_e}^{(1)'}<0$.\par
When $\tau>\mu_2$, the first-order derivation of ${\mathcal{P}_e}^{(1)}$ in (\ref{xi3}) respect to $\tau$ is given by
\begin{equation}
\begin{aligned}
{\mathcal{P}_e}^{(1)'}(\tau)=&\frac{1}{2P_{\fff S}\ln(\rho)}e^{-\frac{\tau}{P_{\fff S}}} \left[\mathbf{Ei}(\frac{\mu_1}{P_{\fff S}})+\mathbf{Ei}(\frac{\mu_2}{P_{\fff S}})-e^{\frac{\mu_1}{P_{\fff S}}} \ln (\mu_1)\right.\\
&\left. +e^{\frac{\mu_2}{P_{\fff S}}}\ln (\mu_2)+e^{\frac{\mu_2}{P_{\fff S}}}\right].
\end{aligned}
\end{equation}
We define $f(x_2)=e^{\frac{x_2}{P_{\fff S}}}\ln (x_2)$, then $f^{'}(x_2)=e^{\frac{x_2}{P_{\fff S}}}(\frac{1}{P_{\fff S}}\ln (x_2)+\frac{1}{x_2})>0$ when $x_2>0$. Thus, $f(x_2)$ monotonically increases with $x_2$ when $x_2>0$. Since ${\mu_2}>{\mu_1}>0$, we know that $e^{\frac{\mu_2}{P_{\fff S}}}\ln (\mu_2)-e^{\frac{\mu_1}{P_{\fff S}}}\ln (\mu_1)>0$. Notice that $\tau>\mu_2$, then we have ${\mathcal{P}_e}^{(1)'}>0$.\par
The above results indicate that ${\mathcal{P}_e}^{(1)}$ monotonically decreases with $\tau$ when $\tau \in (\mu_1, \mu_2]$, and monotonically increases when $\tau \in ( \mu_2, \infty )$, so the optimal detection threshold $\tau^{*}$ is given by (\ref{op-tau}). 

By substituting the $\tau^*$ in (\ref{op-tau}) into (\ref{xi2}), we then know that the minimum DEP ${\mathcal{P}_e}^{(1)*}$  in the first time slot is determined as (\ref{m-pe1}). Based on a similar proof as above, we can see that in the second time slot, the optimal detection threshold $\tau^*$ and the DEP ${\mathcal{P}_e}^{(2)*}$ are given by (\ref{op-tau}) and (\ref{2-pe}), respectively.
\end{proof}

Based on the results in \textbf{Lemma 2}, we can establish the following theorem on the minimum DEP of the two-hop relay system.
\begin{theorem}
For both the single/multiple antenna scenarios of the concerned two-hop relay system with transmit power $P_{\fff S}$ for $S$, transmit power $P_{\fff R}$ for $R$, noise uncertainty $\rho$ and nominal noise power $\sigma_n^2$, the minimum DEP $\xi^*$ is given by
\begin{equation}
\begin{aligned}
\xi^* &\!\!=\!1-\!\left[ \frac{1}{4(\ln\rho)^2}(\ln\frac{\mu_2}{\mu_1})^2\!\left[e^{-\frac{\mu_2}{P_{\fff S}}}\!\!\left( \mathbf{Ei}(\frac{\mu_1}{P_{\fff S}}) \!-\!\mathbf{Ei}(\!\frac{\mu_2}{P_{\fff S}})\!\right)\!\right]\!\!\left[e^{-\frac{\mu_2}{P_{\fff R}}}\right.\right.\\
& \quad \left. \left. \times\left( \mathbf{Ei}(\frac{\mu_1}{P_{\fff R}}) \!-\!\mathbf{Ei}(\frac{\mu_2}{P_{\fff R}})\right)\right]\right].
\label{two-hop dep}
\end{aligned}
\end{equation}
\end{theorem}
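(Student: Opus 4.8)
The plan is to obtain $\xi^*$ by minimizing the series-detection DEP in (\ref{2-dep}) over the warden's thresholds and then plugging in the single-slot optima already established in \textbf{Lemma 2}. First I would recall the combining identity $\xi = 1-(1-\mathcal{P}_e^{(1)})(1-\mathcal{P}_e^{(2)})$, noting that $\mathcal{P}_e^{(1)}$ depends only on the first-hop detection threshold and $P_{\fff S}$, while $\mathcal{P}_e^{(2)}$ depends only on the second-hop threshold and $P_{\fff R}$ (the two hops see independent channel realizations and independent noise, so the two detection events factorize). By \textbf{Lemma 2}, in each hop the DEP is minimized at the \emph{same} threshold $\tau^*=\rho\sigma_n^2$, with minimum values $\mathcal{P}_e^{(1)*}$ and $\mathcal{P}_e^{(2)*}$ given by (\ref{m-pe1}) and (\ref{2-pe}).

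Next I would observe that each factor $1-\mathcal{P}_e^{(i)}$ lies in $[0,1]$: it equals $1$ for $\tau\le\mu_1$ (the first branch of (\ref{pe1})) and is non-negative throughout, so for every choice of thresholds the product $(1-\mathcal{P}_e^{(1)})(1-\mathcal{P}_e^{(2)})$ is at most $(1-\mathcal{P}_e^{(1)*})(1-\mathcal{P}_e^{(2)*})$, with equality attained by taking $\tau=\rho\sigma_n^2$ in both hops. Hence $\xi$ is minimized there and $\xi^*=1-(1-\mathcal{P}_e^{(1)*})(1-\mathcal{P}_e^{(2)*})$. It then remains to substitute (\ref{m-pe1}) and (\ref{2-pe}): writing $1-\mathcal{P}_e^{(i)*}=\frac{1}{2\ln\rho}\ln(\tfrac{\mu_2}{\mu_1})\,e^{-\mu_2/P}\big(\mathbf{Ei}(\tfrac{\mu_1}{P})-\mathbf{Ei}(\tfrac{\mu_2}{P})\big)$ with $P=P_{\fff S}$ for $i=1$ and $P=P_{\fff R}$ for $i=2$, multiplying the two factors, and collecting the constant $\frac{1}{4(\ln\rho)^2}(\ln\tfrac{\mu_2}{\mu_1})^2$ yields exactly (\ref{two-hop dep}).

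The only genuinely substantive point — everything else is bookkeeping — is the decoupling step: justifying that the warden cannot do better by jointly tuning the two hop-thresholds than by optimizing each hop separately. This rests on (i) the product form of $1-\xi$, inherited from the independence of the two detection problems, and (ii) the non-negativity of each factor $1-\mathcal{P}_e^{(i)}$, without which the fact that the maximum of a product of two non-negative functions equals the product of their maxima would fail; I would be sure to state both explicitly. Since \textbf{Lemma 2} additionally shows the per-hop optimal thresholds coincide, the conclusion even survives the restriction to a single common threshold, and — because TAS gives the single-antenna warden no extra antenna gain, so $\mathcal{P}_e^{(i)}$ and hence $\xi$ take identical forms in both scenarios — the statement holds uniformly for the single- and multi-antenna cases.
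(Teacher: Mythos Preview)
Your approach is essentially the paper's: invoke the independence of the two hops to write $\xi^*=1-(1-\mathcal{P}_e^{(1)*})(1-\mathcal{P}_e^{(2)*})$ and then substitute the per-hop optima from \textbf{Lemma 2}; the paper simply asserts the decoupling via independence, whereas you additionally spell out the product/non-negativity argument. One small slip to fix: for $\tau\le\mu_1$ the first branch of (\ref{pe1}) gives $\mathcal{P}_e^{(i)}=1$, so the factor $1-\mathcal{P}_e^{(i)}$ equals $0$ there, not $1$---but this does not affect the non-negativity you actually need.
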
\par
\begin{proof}
Due to the assumption that the detection in two time slots is independent, the covert performance in each transmission needs to be guaranteed. According to (\ref{2-dep}), the minimum DEP $\xi^*$ of the two-hop relay system can be determined as
\begin{equation}
\begin{aligned}
{ \xi}^*= 1-(1- {\mathcal{P}_e}^{(1)*})(1- {\mathcal{P}_e}^{(2)*}).\label{2-mdep}
\end{aligned}
\end{equation}
By substituting the expressions of ${\mathcal{P}_e}^{(1)*}$ and ${\mathcal{P}_e}^{(2)*}$ into (\ref{2-mdep}), we can see that $\xi^*$ is determined by (\ref{two-hop dep}). 
\end{proof}

\subsection{Covert Throughput in Single-antenna Scenario}
The covert throughput is defined as the average rate of successfully transmitted covert messages from $S$ to $D$ \cite{he2018covert}, \cite{zheng2019multi}. If we use $T$ to denote the target transmission rate of $S$ and use $p_{out}$ to denote the transmission outage probability of the covert communication from $S$ to $D$, then the covert throughput $\eta$ of the concerned relay system is evaluated as
 \begin{equation}
\begin{aligned}
 \eta = T(1-p_{out}),\label{d_eta}
\end{aligned}
\end{equation}

Regarding the covert throughput of the concerned relay system under single-antenna scenario, we have the following theorem.
\begin{theorem}
For the single-antenna scenario of the concerned two-hop relay system with transmit power $P_{\fff S}$ for $S$, transmit power $P_{\fff R}$ for $R$, noise uncertainty $\rho$, nominal noise power $\sigma_n^2$ and target rate $T_s$, the corresponding covert throughput $\eta_{s}$ is given by
\begin{equation}
\begin{aligned}
\eta_{s}&=T_{s}\!\left[\frac{1}{4(\ln\!\rho)^2}\mathbf{Ei}\left(\frac{\kappa_1\! \rho \sigma_n^{2}}{P_{\fff S}},\!\frac{\kappa_1\!\sigma_n^{2}}{P_{\fff S} \rho}\right)\! \times\!\mathbf{Ei}\left(\frac{\kappa_1\!\rho \sigma_n^{2}}{P_{\fff R}},\!\frac{\kappa_1\!\sigma_n^{2}}{P_{\fff R} \rho}\right)\right]\!,
\label{eta_s}
\end{aligned}
\end{equation}
where $\kappa_1=2^{2T_{s}}-1$ and $\mathbf{Ei}(x,y) =\mathbf{Ei}(-x)-\mathbf{Ei}(-y)$. \par
\end{theorem}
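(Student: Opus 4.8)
The plan is to compute $\eta_{s}=T_{s}(1-p_{out})$ by first reducing the end-to-end transmission outage to a product of the two per-hop outages, and then evaluating each per-hop outage in closed form. Since the relay works in DF mode and the covert message is carried over two time slots, it is delivered to $D$ successfully if and only if $R$ decodes the $S\rightarrow R$ transmission without outage in the first slot \emph{and} $D$ decodes the $R\rightarrow D$ transmission without outage in the second slot. The fading coefficients $h_{\fff SR}$, $h_{\fff RD}$ and the corresponding receiver-noise realizations are independent across the two quasi-static slots, so the two hop-success events are independent and
\begin{equation}
1-p_{out}=\bigl(1-p_{out}^{(1)}\bigr)\bigl(1-p_{out}^{(2)}\bigr),
\end{equation}
where $p_{out}^{(i)}$ denotes the transmission outage probability of the $i$-th hop.

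I would then derive $p_{out}^{(1)}$. Because the end-to-end rate $T_{s}$ must be supported over two time slots, each hop has to carry spectral efficiency $2T_{s}$, so the first hop is in outage precisely when its instantaneous receive SNR $\gamma_{\fff SR}=P_{\fff S}|h_{\fff SR}|^{2}/\sigma_{\fff SR}^{2}$ falls below $\kappa_{1}=2^{2T_{s}}-1$. Conditioning on the receiver noise power $\sigma_{\fff SR}^{2}=x$, which obeys the bounded log-uniform uncertainty model of (\ref{pdf-nw}) on $[\mu_{1},\mu_{2}]$ with density $\frac{1}{2x\ln\rho}$, and using that $|h_{\fff SR}|^{2}$ is unit-mean exponentially distributed under Rayleigh fading, the conditional non-outage probability is $\mathbb{P}(\gamma_{\fff SR}\ge\kappa_{1}\mid\sigma_{\fff SR}^{2}=x)=e^{-\kappa_{1}x/P_{\fff S}}$. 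Averaging over $x$ gives
\begin{equation}
1-p_{out}^{(1)}=\int_{\mu_{1}}^{\mu_{2}}\frac{1}{2x\ln\rho}\,e^{-\kappa_{1}x/P_{\fff S}}\,dx .
\end{equation}

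The final step is to put this integral into exponential-integral form. The substitution $t=\kappa_{1}x/P_{\fff S}$ converts it into $\frac{1}{2\ln\rho}\int_{\kappa_{1}\mu_{1}/P_{\fff S}}^{\kappa_{1}\mu_{2}/P_{\fff S}}\frac{e^{-t}}{t}\,dt$, and since $\mathbf{Ei}(-u)=-\int_{u}^{\infty}\frac{e^{-t}}{t}\,dt$ this equals $\mathbf{Ei}(-\kappa_{1}\mu_{2}/P_{\fff S})-\mathbf{Ei}(-\kappa_{1}\mu_{1}/P_{\fff S})$; recalling $\mu_{1}=\sigma_{n}^{2}/\rho$, $\mu_{2}=\rho\sigma_{n}^{2}$ and the notation $\mathbf{Ei}(x,y)=\mathbf{Ei}(-x)-\mathbf{Ei}(-y)$, this is exactly $\frac{1}{2\ln\rho}\mathbf{Ei}\!\left(\frac{\kappa_{1}\rho\sigma_{n}^{2}}{P_{\fff S}},\frac{\kappa_{1}\sigma_{n}^{2}}{P_{\fff S}\rho}\right)$. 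An identical computation with $P_{\fff S}$ replaced by $P_{\fff R}$ (and $h_{\fff SR},\sigma_{\fff SR}^{2}$ by $h_{\fff RD},\sigma_{\fff RD}^{2}$) gives $1-p_{out}^{(2)}=\frac{1}{2\ln\rho}\mathbf{Ei}\!\left(\frac{\kappa_{1}\rho\sigma_{n}^{2}}{P_{\fff R}},\frac{\kappa_{1}\sigma_{n}^{2}}{P_{\fff R}\rho}\right)$. Substituting both into $\eta_{s}=T_{s}(1-p_{out}^{(1)})(1-p_{out}^{(2)})$ yields (\ref{eta_s}).

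I expect the main difficulty to be bookkeeping rather than anything conceptual: one has to (i) fix the rate convention so that the per-hop SNR threshold is $2^{2T_{s}}-1$ rather than $2^{T_{s}}-1$, reflecting that each hop occupies only half of the two-slot transmission; (ii) justify the multiplicative form of the end-to-end non-outage probability from the independence of the fading and noise in the two slots; and (iii) keep the orientation of the limits straight when rewriting $\int_{a}^{b}e^{-t}/t\,dt$ through the two-argument function $\mathbf{Ei}(\cdot,\cdot)$, so that the signs match the statement. Once these points are pinned down, the remaining manipulations are a routine change of variables.
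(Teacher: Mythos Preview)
Your proposal is correct and follows essentially the same route as the paper: decompose the DF end-to-end non-outage into the product of the two independent per-hop non-outage probabilities, use the half-rate capacity $C_{\fff SR}^{(s)}=\tfrac{1}{2}\log_{2}(1+\gamma_{\fff SR})$ to get the SNR threshold $\kappa_{1}=2^{2T_{s}}-1$, average the exponential channel-gain tail $e^{-\kappa_{1}x/P}$ against the log-uniform noise density on $[\sigma_{n}^{2}/\rho,\rho\sigma_{n}^{2}]$, and rewrite the resulting $\int e^{-t}/t\,dt$ via $\mathbf{Ei}(\cdot)$. The only cosmetic difference is that the paper writes the per-hop outage as a double integral in $(x,y)$ before evaluating, whereas you condition on the noise power first; the computations and final expression are identical.
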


\begin{proof} From (\ref{d_eta}) we can see that to evaluate the covert throughput $\eta_{s}$ in the single-antenna scenario, we need to first determine the outage probability $p_{out}^{(s)}$ in this scenario. Due to the DF relaying concerned in this work, $p_{out}^{(s)}$ can be determined as
\begin{equation}
\begin{aligned}
p_{out}^{(s)} &=1-(1-p_{out}^{(s,1)})(1-p_{out}^{(s,2)}),
\label{pout_sin}
\end{aligned}
\end{equation}
where $p_{out}^{(s,1)}$ and $p_{out}^{(s,2)}$ denote the outage probability in the first hop (time slot) and second hop (time slot), respectively. The outage probability $p_{out}^{(s,1)}$ in the first hop is defined as
\begin{equation}
\begin{aligned}
p_{out}^{(s,1)} = \mathbb{P}(C_{\fff{SR}}^{(s)}<T_{s}),\label{pout_sin_1}
\end{aligned}
\end{equation}
here $C_{\fff SR}^{(s)}$ is the capacity of $S \rightarrow R$ channel in this scenario, which can be determined as
\begin{equation}
\begin{aligned}
C_{\fff SR}^{(s)}= \frac{1}{2}  \log _2(1+ \frac{P_{\fff S}\mid h_{\fff SR}\mid ^{2}}{\sigma_{\fff SR}^2}). \label{s-csr}
\end{aligned}
\end{equation}
Notice that the noise power $\sigma_{\fff SR}^2$ at $R$ follows a logarithmic uniform distribution \cite{he2017covert}, so its PDF $f_{\sigma_{\fff SR}^2}(x)$ can be determined as
\begin{equation}
\begin{aligned}
f_{ \sigma_{\fff SR}^{2}}(x)=  \begin{cases} \frac{1}{2x\ln (\rho) } , & \text{if}\  \frac{\sigma_n^2}{\rho}\leq x \leq \rho\sigma_n^2 \\0, & \text{otherwise}.\end{cases}\label{pdf-np}
\end{aligned}
\end{equation}
The channel gain ${\mid h_{\fff SR}\mid^2}$ from $S$ to $R$ can be modeled as exponential distribution \cite{gao2021covert}, so its PDF $f_{|h_{\fff SR}|^2}(x)$ is determined as
\begin{equation}
\begin{aligned}
f_{\mid h_{\fff SR}\mid^2}(x)=\exp(-x).
\label{pdf_hsr}
\end{aligned}
\end{equation}
Based on (\ref{pout_sin_1})-(\ref{pdf_hsr}), we know that the outage probability $p_{out}^{(s,1)}$ in the first hop can be derived as
\begin{equation}
\begin{aligned}
p_{out}^{(s,1)} &= \mathbb{P}\left(\frac{P_{\fff S}\mid h_{\fff SR}\mid ^{2}}{\sigma_{\fff SR}^2}<2^{2T_{s}}-1 \right)\\
&=\int_{\frac{\sigma_n^2}{\rho}}^{\rho\sigma_n^2} \! \int_{0}^{\frac{(2^{2T_{s}}-1)y}{P_{\fff S}}} \!\! f_{\mid h_{\fff SR}\mid^2}(x)f_{\sigma_{\fff SR}^2}(y)dxdy\\
&=\!1\!-\!\frac{1}{2\ln  \rho}\!\!\left[\!\mathbf{Ei}(-\frac{(2^{2T_{s}}\!-\!1) \rho \sigma_n^{2}}{P_{\fff S}})\!-\!\mathbf{Ei}(-\frac{(2^{2T_{s}}\!-\!1)\sigma_n^{2}}{P_{\fff S} \rho})\!\right]\!.
\label{p_out_s1}
\end{aligned}
\end{equation}
Following a similar analysis as above, we can see that the outage probability $p_{out}^{(s,2)}$ in the second hop is given by
\begin{equation}
\begin{aligned}
p_{out}^{(s,2)} &= \mathbb{P}\left(\frac{P_{\fff R}\mid h_{\fff RD}\mid ^{2}}{\sigma_{\fff RD}^2}<2^{2T_{s}}-1 \right)\\
&=\int_{\frac{\sigma_n^2}{\rho}}^{\rho\sigma_n^2} \! \int_{0}^{\frac{(2^{2T_{s}}-1)y}{P_{\fff R}}} \!\! f_{\mid h_{\fff RD}\mid^2}(x)f_{\sigma_{\fff RD}^2}(y)dxdy\\
&=\!1\!-\!\frac{1}{2\ln  \rho}\!\!\left[\!\mathbf{Ei}(-\frac{(2^{2T_{s}}\!-\!1) \rho \sigma_n^{2}}{P_{\fff R}})\!-\!\mathbf{Ei}(-\frac{(2^{2T_{s}}\!-\!1)\sigma_n^{2}}{P_{\fff R} \rho})\!\right]\!.
\label{p_out_s2}
\end{aligned}
\end{equation}
By substituting (\ref{p_out_s1}) and (\ref{p_out_s2}) into (\ref{pout_sin}), the outage probability $p_{out}^{(s)}$ of the system in single-antenna scenario can be determined as
 \begin{equation}
\begin{aligned}
\!\!p_{out}^{(s)}\!\!&=\!1\!-\!\frac{1}{4\!(\ln  \rho)^2}\!\!\left[\!\mathbf{Ei}(\!-\frac{(2^{2\!T_{s}}\!-\!1\!) \rho \sigma_n^{2}}{P_{\fff S}})\!-\!\mathbf{Ei}(\!-\frac{(2^{2\!T_{s}}\!-\!1)\sigma_n^{2}}{P_{\fff S} \rho}\!)\!\right]\!\!\\
& \quad \times \left[\!\mathbf{Ei}(\!-\frac{(2^{2\!T_{s}}\!-\!1\!) \rho \sigma_n^{2}}{P_{\fff R}})\!-\!\mathbf{Ei}(\!-\frac{(2^{2\!T_{s}}\!-\!1)\sigma_n^{2}}{P_{\fff R} \rho}\!)\!\right].\!\!
\label{pout_40}
\end{aligned}
\end{equation}
Finally, by substituting $T_s$ and (\ref{pout_40}) into (\ref{d_eta}), we can see that  $\eta_{s}$ is determined by (\ref{eta_s}).
\end{proof}

\subsection{Covert Throughput in Multi-antennas Scenario}
Regarding the throughput of the concerned relay system under the multi-antenna scenario, we have the following theorem.

\begin{theorem}
For the multi-antenna scenario of the concerned two-hop relay system with transmit power $P_{\fff S}$ for $S$, transmit power $P_{\fff R}$ for $R$, noise uncertainty $\rho$, nominal noise power $\sigma_n^2$ and target rate $T_m$, the corresponding covert throughput $\eta_{m}$ is given by (\ref{mul-eta}),
\begin{figure*}[t!]
 \normalsize
	\vspace*{4pt}
\small
 \begin{equation}
 \begin{aligned}
\!\!\eta_{m} \!&=\!T_{m}\!\left\{1-\!\!\frac{N_{\fff S}}{2\ln \rho(N_{\fff R_{r}}-\!1)!}\!\! \!\! \sum\limits_{s=0}^{N_{\fff S}-1}(-1)^s\!\!\left(\begin{array}{c}{\!\!\!\!N_{\fff S}-1\!\!\!\!}\\ s\end{array}\right)\!\! \! \sum\limits_{k=0}^{s(N_{\fff S}-1)}\!\!\! \left[ \sum\limits_{j=0}^{\omega_1}\frac{j!\left(\begin{array}{c}{\!\!\!\! \omega_1\!\!\!\!}\\ j\end{array}\right)}{(s+1)^{j+1}}\!\! \left( \begin{array}{c}{\!\!\!\!\kappa_2\!\!\!\!}\\ P_{\fff S}\end{array}\right)^{\omega_1-j}\!\! \!\! \kappa_3^{-(N_{\fff R_{r}}-1)} \left[\Gamma(N_{\fff R_{r}}-1,\kappa_3\mu_2) \right.\right.\right.\\
& \quad \left.\left.\left. -\Gamma(N_{\fff R_{r}}-1,\kappa_3\mu_1)\right]\! +\! \frac{\omega_1\ln \rho^2}{(s+1)^{N_{\fff R_{r}}+k}}\! \right]\!\right\}\!\!\left\{\!\!1-\!\!\frac{N_{\fff R_{t}}}{2\ln \rho(N_{\fff D}\!-\!1)!}\!\! \!\! \sum\limits_{r=0}^{N_{\fff R_{t}}-1}(-1)^r\!\!\left(\begin{array}{c}{\!\!\!\!N_{\fff R_{t}}-1\!\!\!\!}\\ r\end{array}\right)\!\! \! \sum\limits_{q=0}^{r(N_{\fff R_{t}}-1)}\!\!\! \left[\! \sum\limits_{n=0}^{\omega_2}\frac{n!\left(\begin{array}{c}{\!\!\!\! \omega_2\!\!\!\!}\\ n\end{array}\right)}{(r+1)^{n+1}}\!\! \left(\!\! \begin{array}{c}{\!\!\!\!\kappa_2\!\!\!\!}\\ P_{\fff R}\end{array}\!\!\right)^{\omega_2-n}\!\! \!\!  \right.\right.\\
& \quad \left.\left. \times \kappa_4^{-(N_{\fff D}-1)} \left[\Gamma(N_{\fff D}-1,\kappa_4\mu_2)-\Gamma(N_{\fff D}-1,\kappa_4\mu_1)\right]+ \frac{\omega_2\ln \rho^2}{(r+1)^{N_{\fff D}+q}} \right]\right\},
\label{mul-eta}
\end{aligned}	 
\end{equation}
	\hrulefill
	\vspace*{4pt}
\end{figure*}
where $\kappa_2=2^{2T_{m}}-1$, $\kappa_3=((j-1)(2^{2T_{m}}-1)/P_{\fff S})^{-(N_{\fff R_{r}}-1)}$, $\kappa_4=((n-1)(2^{2T_{m}}-1)/P_{\fff R})^{-(N_{\fff D}-1)}$, $\omega_1=N_{R_{r}}+k-1$, $\omega_2=N_{D}+q-1$ and $\Gamma(\cdot)$ denotes Gamma function.
\end{theorem}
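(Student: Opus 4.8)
The plan is to follow the route of Theorem~3, with the scalar channel gains replaced by the post-combining SNRs that the TAS/MRC scheme produces. By (\ref{d_eta}) it suffices to evaluate the end-to-end outage probability $p_{out}^{(m)}$, and because of DF relaying this factorises as $p_{out}^{(m)} = 1-(1-p_{out}^{(m,1)})(1-p_{out}^{(m,2)})$ with $p_{out}^{(m,1)}=\mathbb{P}(C_{\fff SR}^{(m)}<T_m)$ and $p_{out}^{(m,2)}=\mathbb{P}(C_{\fff RD}^{(m)}<T_m)$. From (\ref{yr_mrc}) the effective SNR of the first hop after MRC at $R$ equals $P_{\fff S}\|\mathbf h_{\fff i^{*}R}\|^2/\sigma_{\fff SR}^2$ (the unit-norm weight $\mathbf w_1$ keeps the post-combining noise variance at $\sigma_{\fff SR}^2$), and since TAS selects the transmit antenna with the largest post-combining SNR, $\|\mathbf h_{\fff i^{*}R}\|^2 = \max_{1\le i\le N_{\fff S}}\|\mathbf h_{\fff iR}\|^2$; the same holds for the second hop with $P_{\fff R}$, $\max_{1\le m\le N_{\fff R_{t}}}\|\mathbf h_{\fff mD}\|^2$ and $\sigma_{\fff RD}^2$. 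Hence $p_{out}^{(m,1)} = \mathbb{P}(\max_i\|\mathbf h_{\fff iR}\|^2 < (2^{2T_m}-1)\sigma_{\fff SR}^2/P_{\fff S})$, and symmetrically for hop~2, which reduces the theorem to computing two double integrals of the same form.

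First I would obtain the density of $G_1 := \max_{1\le i\le N_{\fff S}}\|\mathbf h_{\fff iR}\|^2$. Under Rayleigh fading each $\|\mathbf h_{\fff iR}\|^2$ is a sum of $N_{\fff R_{r}}$ i.i.d.\ unit-mean exponentials, hence Erlang with CDF $F(x)=1-e^{-x}\sum_{\ell=0}^{N_{\fff R_{r}}-1}x^{\ell}/\ell!$, and the $N_{\fff S}$ columns are independent, so $f_{G_1}(x)=N_{\fff S}\,f(x)F(x)^{N_{\fff S}-1}$. Expanding $F(x)^{N_{\fff S}-1}$ by the binomial theorem in powers of $e^{-x}\sum_{\ell}x^{\ell}/\ell!$ and then writing each such power as a single polynomial $\sum_{k=0}^{s(N_{\fff S}-1)} c_{s,k}\,x^{k}$ (a routine multinomial collection of terms) yields exactly the outer sums $\sum_{s=0}^{N_{\fff S}-1}(-1)^s\binom{N_{\fff S}-1}{s}\sum_{k=0}^{s(N_{\fff S}-1)}$ of (\ref{mul-eta}), with $\omega_1=N_{\fff R_{r}}+k-1$ denoting the degree of the monomial that multiplies $e^{-(s+1)x}$ in the integrand.

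Then I would execute the two nested integrations hop by hop. Conditioning on $\sigma_{\fff SR}^2=y$, the inner integral $\int_0^{\kappa_2 y/P_{\fff S}} f_{G_1}(x)\,dx$ is, term by term, of the type $\int_0^{a} x^{\omega_1}e^{-(s+1)x}\,dx$; replacing the resulting lower incomplete gamma by its finite series produces the factor $\sum_{j=0}^{\omega_1} j!\binom{\omega_1}{j}(s+1)^{-(j+1)}(\kappa_2/P_{\fff S})^{\omega_1-j}$, and carrying the leftover exponential $e^{-(s+1)\kappa_2 y/P_{\fff S}}$ into the outer integral against the log-uniform density (\ref{pdf-np}) of $\sigma_{\fff SR}^2$ over $[\mu_1,\mu_2]$ gives, after a change of variable, the $1/(2\ln\rho)$ normalisation together with $\kappa_3^{-(N_{\fff R_{r}}-1)}[\Gamma(N_{\fff R_{r}}-1,\kappa_3\mu_2)-\Gamma(N_{\fff R_{r}}-1,\kappa_3\mu_1)]$, the constant $\kappa_3$ absorbing $P_{\fff S}$, $\kappa_2$ and $(s+1)$; the pieces of the expansion that are constant in $x$ integrate elementarily and supply the residual $\omega_1(\ln\rho)^2(s+1)^{-(N_{\fff R_{r}}+k)}$ summand. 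This assembles the first brace of (\ref{mul-eta}); repeating the same computation for the second hop under the substitutions $P_{\fff S}\!\to\!P_{\fff R}$, $N_{\fff S}\!\to\!N_{\fff R_{t}}$, $N_{\fff R_{r}}\!\to\!N_{\fff D}$ yields the second brace, and substituting $1-p_{out}^{(m)}$ with these two factored outage terms into (\ref{d_eta}) multiplied by $T_m$ finishes the proof.

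The main obstacle is the combinatorial bookkeeping rather than any analytic subtlety: one must carefully expand the $(N_{\fff S}-1)$-th power of the Erlang CDF into the stated triple sum, track the polynomial coefficients and the auxiliary degrees $\omega_1,\omega_2$, and justify interchanging the finite sums with the two integrals so that the per-term incomplete-gamma evaluations collect into (\ref{mul-eta}). A secondary point of care is the log-uniform noise integral of a monomial-times-exponential integrand, which must be handled so that the two structurally different contributions inside each bracket (the incomplete-gamma term and the $(\ln\rho)^2$ term) are obtained correctly; note also that, as remarked after (\ref{2-dep}), $W$'s detection is unchanged in the multi-antenna case, so the covertness constraint plays no role in this throughput expression and only the reliability side enters.
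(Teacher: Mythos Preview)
Your proposal is correct and follows essentially the same route as the paper: factor the DF outage as $p_{out}^{(m)}=1-(1-p_{out}^{(m,1)})(1-p_{out}^{(m,2)})$, identify the post-TAS/MRC effective gain as the maximum over $N_{\fff S}$ i.i.d.\ Erlang-$N_{\fff R_{r}}$ variables, write its density via the order-statistic formula $N_{\fff S}f(x)F(x)^{N_{\fff S}-1}$, binomially expand the CDF power, and then integrate term by term first over the gain (yielding incomplete gammas) and then against the log-uniform noise density over $[\mu_1,\mu_2]$; the second hop is handled by the obvious substitutions. The only cosmetic difference is that the paper invokes Gradshteyn--Ryzhik (3.351) and (8.310.1) for the inner integral where you spell out the finite lower-incomplete-gamma expansion directly.
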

\begin{proof} From (\ref{d_eta}) we can see that to evaluate the covert throughput $\eta_{m}$ in the multi-antenna scenario, we need to first determine the outage probability $p_{out}^{(m)}$ in this scenario. Due to the DF relaying, $p_{out}^{(m)}$ can be determined as 
\begin{equation}
\begin{aligned}
p_{out}^{(m)} &=1-(1-p_{out}^{(m,1)})(1-p_{out}^{(m,2)}),
\label{pout_mul}
\end{aligned}
\end{equation}
where $p_{out}^{(m,1)}$ and $p_{out}^{(m,2)}$ denote the outage probability in the first hop and second hop, respectively. The outage probability $p_{out}^{(m,1)}$ in the first hop is defined as
\begin{equation}
\begin{aligned}
p_{out}^{(m,1)} = \mathbb{P}(C_{\fff{SR}}^{(m)}<T_{m}),\label{pout_m_1}
\end{aligned}
\end{equation}
here $C_{\fff SR}^{(m)}$ is the capacity of $S \rightarrow R$ channel in multi-antenna scenario. According to \cite{molisch2005capacity}, $C_{\fff SR}^{(m)}$ can be determined as 
\begin{equation}
\begin{aligned}
\!\!\!\!C_{\fff S\!R}^{(m)}\!&=\!\frac{1}{2}\!\log _2\!\mid \mathbf{I}_{N_{\fff R_{r}}}\!+\!\frac{P_{\fff S}}{{\sigma_{\fff SR}^2}}\mathbf{h}_{\fff {i^{*}R}}^{H}\mathbf{w}_1^{H}\mathbf{w}_1\mathbf{h}_{\fff {i^{*}R}}\!\mid \!\!\!\! \\
&=\!\frac{1}{2}\!\log _2\!\mid \mathbf{I}_{N_{\fff R_{r}}}\!+\!\frac{P_{\fff S}}{{\sigma_{\fff SR}^2}}||\mathbf{w}_1\mathbf{h}_{\fff {i^{*}R}}||^2\mid.
\label{csr-mul}
\end{aligned}
\end{equation}
For simplicity, let $Z=||\mathbf{w}_1\mathbf{h}_{\fff {i^{*}R}}||$. Under TAS-MRC, the channel gain $|Z|^2$ from $S$ to $R$ can be rewritten as 
\begin{equation}
\begin{aligned}
|Z|^2 = \mathop{\max}_{1\leq i\leq N_{\fff S}}\left\{{h}_{\fff iR}= \sum_{j=1}^{N_{\fff R_{r}}}|h_{ij}|^2\right\}. \label{Z}
\end{aligned}
\end{equation}
Notice that the total channel gain ${h}_{i{\fff R}}$ from the $i$-th antenna to $R$ can be modeled as a central chi-squared random variable with $2N_{\fff R_{r}}$ degrees of freedom, so its PDF $f_{{h}_{{i\fff R}}}(x)$ and cumulative distribution function (CDF) $F_{{h}_{i{\fff R}}}(x)$ are determined in \cite{chen2005analysis, haykin1988digital} as 
\begin{equation}
\begin{aligned}
f_{ {h}_{iR}}(x)=\frac{1}{(N_{\fff R_{r}}-1)!}x^{N_{\fff R_{r}}-1}e^{-x},
\label{pdf_hiR}
\end{aligned}
\end{equation}
 \begin{equation}
\begin{aligned}
F_{ {h}_{iR}}(x)=1- e^{-x} \sum\limits_{j=0}^{N_{\fff R_{r}}-1} \frac{x^j}{j!}.
\label{cdf_hiR}
\end{aligned}
\end{equation}
Based on (\ref{pdf_hiR}) and (\ref{cdf_hiR}), the PDF $f_{\mid  {Z}\mid^2}(x)$ of channel gain $|Z|^2$ from the selected $i^*$-th antenna to $R$ can be determined by
\begin{equation}
\begin{aligned}
f_{\mid  {Z}\mid^2}(x) \!&=\! N_{\fff S}\left[F_{ {h}_{iR}}(x)\right]^{N_{\fff S}-1}f_{ {h}_{iR}}(x)\\
&=\!\frac{N_{\fff S}}{(N_{\fff R_{r}}-1)!}\left(1-e^{-x}\!\!\!\sum\limits_{j=0}^{N_{\fff R_{r}}-1}\!\! \frac{x^j}{j!} \right)^{N_{\fff S}-1}x^{N_{\fff R_{r}}-1}e^{-x}\!\!.\label{pdf-z}
\end{aligned}
\end{equation}
Based on (\ref{pout_m_1}), (\ref{csr-mul}), (\ref{pdf-np}) and (\ref{pdf-z}), we know that the outage probability $p_{out}^{(m,1)}$ in the first hop can be derived as (\ref{pout_mul_1}), in which the Step (a) is obtained based on \cite[Eq. (3.351)]{gradshteyn2014table} and \cite[Eq. (8.310.1)]{gradshteyn2014table}. Following a similar analysis as above, we can derive the outage probability $p_{out}^{(m,2)}$ in the second hop. By substituting $p_{out}^{(m,1)}$ and $p_{out}^{(m,2)}$ into (\ref{pout_mul}), the outage probability $p_{out}^{(m)}$ of the system in multi-antenna scenario can be determined.

Finally, by substituting $T_m$ and $p_{out}^{(m)}$ into (\ref{d_eta}), we can see that $\eta_m$ is determined by (\ref{mul-eta}).
\end{proof}

\begin{figure*}[t!]
 \normalsize
	\vspace*{4pt}
	\small
 \begin{equation}
 \begin{aligned}
p_{out}^{(m,1)}\!&=\mathbb{P}\{C_{\fff SR}^{(m)} < 2^{2T_{m}}-1\}\\
	&=  \int_{\frac{\sigma_n^2}{\rho}}^{ \rho \sigma_n^2 }\int_{0}^{ \frac{(2^{2T_{m}}-1) x}{P_{\fff S}}}\!\! \frac{1}{2x\ln \rho}\frac{N_{\fff S}}{(N_{\fff R_{r}}-1)!} 
	\left(1-e^{-y}\sum\limits_{j=0}^{\!\!N_{\fff R_{r}}-1\!\!}\frac{y^{j}}{j!}\right)^{N_{\fff S}-1}y^{N_{\fff R_{r}}-1}e^{-y}dxdy\\
	&\overset{(a)}{=}\!\! \frac{1}{2\ln \rho}\frac{N_{\fff S}}{(N_{\fff R_{r}}-\!1)!}\!\! \!\! \sum\limits_{s=0}^{N_{\fff S}-1}(-1)^s\!\!\left(\begin{array}{c}{\!\!\!\!N_{\fff S}-1\!\!\!\!}\\ s\end{array}\right)\!\! \! \sum\limits_{k=0}^{s(N_{\fff S}-1)}\!\!\! \left\{ \sum\limits_{j=0}^{k+N_{\fff R_{r}}-1}\frac{j!\left(\begin{array}{c}{\!\!\!\! k+N_{\fff R_{r}}-1\!\!\!\!}\\ j\end{array}\right)}{(s+1)^{j+1}}\!\! \left( \begin{array}{c}{\!\!\!\!2^{2T_{m}}-1\!\!\!\!}\\ P_{\fff S}\end{array}\right)^{N_{\fff R_{r}}+k-j-1}\!\! {\frac{(2^{2T_{m}}-1)}{P_{\fff S}}}^{-(N_{\fff R_{r}}-1)}\right.\\
&\quad \times \left. (j-1)\left[\Gamma(N_{\fff R_{r}}-1,\frac{(j-1)(2^{2T_{m}}-1)}{P_{\fff S}}\mu_2)\! -\! \Gamma(N_{\fff R_{r}}-1,\frac{(j-1)(2^{2T_{m}}-1)}{P_{\fff S}}\mu_1)\right]\! +\! \frac{(k+N_{\fff R_{r}}-1)\ln \rho^2}{(s+1)^{N_{\fff R_{r}}+k}}\!\! \right\}.\label{pout_mul_1}\\ 
\end{aligned}		 
\end{equation} 
	\hrulefill
\end{figure*}

\section{Covert Throughput maximization}
In this section, we formulate the optimization problems for covert throughput maximization under both single and multi-antenna scenarios. We determine the analytical conditions for the optimal parameter settings in the single antenna scenario to achieve the maximal covert throughput there, and also develop a direct search algorithm for covert throughput maximization in the more complex multi-antenna scenario.
\subsection{Covert Throughput Maximization in Single-antenna Scenario}
From \textbf{Theorem 2} we can see that the covert throughput is related to $P_{\fff S}$, $P_{\fff R}$ and ${T_s}$. Here, we determine the optimal settings of these parameters to achieve the maximum covert throughput $\eta_{s}^*$ in the single-antenna scenario. This optimization problem can be formulated as follows.
\begin{subequations}
 \begin{align}
&\max_{P_{\fff S}, P_{\fff R}, T_{s}}\eta_{s}  \label{sin_eta_a}\\
& \quad s.t. \quad \xi ^*\geq 1- \epsilon ,   \label{sin_eta_b}  \\
 &\!\!\!\qquad\qquad  p_{out}^{(s)} \leq  \delta ,  \label{sin_eta_c}  \\
&\qquad\quad\  P_{\fff S} \leq P_{ max },\label{sin_eta_d}  \\
&\qquad\quad\  P_{\fff R} \leq P_{ max },\label{sin_eta_e}  
 \end{align}
 \end{subequations}
 where $\epsilon$ and $\delta$ denote the constraints of covertness and reliability, respectively. 

By calculating the derivative of $\xi^*$ with respect to $P_{\fff S}$ and $P_{\fff R}$, we can easily see that (\ref{sin_eta_b}) is a convex constraint. Similarly, by calculating the derivative of $T_s$ with respect to $p_{out}^{(s)}$, we know (\ref{sin_eta_c}) is also a convex constraint. Since (\ref{sin_eta_d}) and (\ref{sin_eta_e}) are convex constraints, the set of constraints (\ref{sin_eta_b})-(\ref{sin_eta_e}) is then a convex set. Regarding the objective function $\eta_s$ in (\ref{sin_eta_a}), we can prove that its second-order partial derivatives respect to $P_{\fff S}$, $P_{\fff R}$ and ${T_s}$ are all non-negative, so its Hessian matrix is positive semi-definite and the $\eta_s$ is then a convex function \cite{aguilera2009convex}. Thus, we can employ the method of Lagrange multipliers and Karush-Kuhn-Tucker (KKT) conditions to solve the optimal problem in (53). For the problem (53), the corresponding Lagrangian function and KKT conditions can be determined as (54) and (55a)-(55f),
\begin{equation}
\begin{aligned}
L(P_{\fff S}, \! P_{\fff R}, \! T_s)\! =\! & -\! (1\! -p_{out}^{(s)})T_s\! +\! K_1(1\! -\epsilon\! -\xi^*)\! +\! K_2(p_{out}^{(s)}\! -\delta)\\
&\quad +K_3(P_{\fff S}-P_{max})+K_4(P_{\fff R}-P_{max}),
\label{kkt1}
\end{aligned}
\end{equation}
\begin{subequations}
\begin{align}
& (\ref{sin_eta_b}),  (\ref{sin_eta_c}),  (\ref{sin_eta_d}),  (\ref{sin_eta_e}),\label{kkt_condition_c}\\
& K_1(1-\epsilon-\xi^*)=0,\label{kkt_condition_d}\\
&K_2(p_{out}^{(s)}-\delta)=0,\label{kkt_condition_e}\\
&K_3(P_{\fff S}-P_{max})=0,\label{kkt_condition_f}\\
&K_4(P_{\fff R}-P_{max})=0,\label{kkt_condition_g}\\
& K_1\geq0,K_2\geq0,K_3\geq0,K_4\geq0,\label{kkt_condition_h}
\end{align}
\end{subequations}
where $K_1$, $K_2$, $K_3$, and $K_4$ are the Lagrange multipliers, (\ref{kkt_condition_c}) is the primal feasibility which ensures the optimal solution satisfies all the constraints, (\ref{kkt_condition_d})-(\ref{kkt_condition_g}) are the complementary slackness which states the Lagrange multipliers are complementary to the corresponding inequality constraints at the optimal solution, (\ref{kkt_condition_h}) is the dual feasibility which ensures the Lagrange multipliers associated with the inequality constraints are non-negative. 
To solve problem (53), we first establish the following equations
\begin{subequations}
\begin{align}
& {{d{L(P_{\fff S}, P_{\fff R}, T_s)}} \over {d{P_{\fff S}}}}=0,\label{kkt_condition_a}\\
& {{d{L(P_{\fff S}, P_{\fff R}, T_s)}} \over {d{P_{\fff R}}}}=0,\\
& {{d{L(P_{\fff S}, P_{\fff R}, T_s)}} \over {d{T_s}}}=0.
\end{align}
\end{subequations}
 For equation (\ref{kkt_condition_a}), we can see from (\ref{kkt1}), (\ref{two-hop dep}), (\ref{eta_s}) that it can be further formulated as
 \begin{equation}
\begin{aligned}
\!\!{{d{L(\!P_{\fff S},P_{\fff R},T_m\!)}} \over {d{P_{\fff S}}}}\!\!=\!&-\!\!\frac{T_s\!+\!K_2}{4P_{\fff S}\!(\ln\rho)^2}\!\!\left[\!\mathbf{Ei}(\!-\frac{\kappa_1\mu_2}{P_{\fff R}}\!)\!-\!\mathbf{Ei}(\!-\frac{\kappa_1\mu_1}{P_{\fff R}}\!)\!\right]\!\!\!\left[\! e^{\!-\!\frac{\kappa_1\mu_1}{P_{\fff S}}}\!\!\right.\\
&\!\!\left.-e^{\!-\!\frac{\kappa_1\!\mu_2}{P_{\fff S}}}\!\right]\!\!+\!\! K_1\!e^{\!-\!\frac{\mu_2}{P_{\fff R}}}\!\!\!\left[\!\mathbf{Ei}(\!\frac{\mu_1}{P_{\fff R}}\!)\!-\!\mathbf{Ei}(\!\frac{\mu_2}{P_{\fff R}}\!)\!\right]\!\!\!\left[\!\frac{\mu_2}{P_{\fff S}^2}\!e^{-\!\frac{\mu_2}{P_{\fff S}}}\!\!\right.\\
&\!\!\!\!\left.\left[\!\mathbf{Ei}(\!\frac{\mu_1}{P_{\fff S}}\!)\!\!-\!\mathbf{Ei}(\!\frac{\mu_2}{P_{\fff S}}\!)\!\right]\!\!+\!\!\frac{1}{P_{\fff S}}\!\!\!\left[\!1\!-\!e^{-\!\frac{(1\!-\rho^2)\!\mu_1}{P_{\fff S}}}\!\!\right]\!\right]\!\!+\!\!K_3\!\!=\!\!0.
\label{D_Ps}
\end{aligned}
\end{equation}
By substituting (\ref{kkt_condition_d}) and (\ref{kkt_condition_f}) into (\ref{D_Ps}), we know that the optimal solution of (53) satisfies the following equation
 \begin{equation}
\begin{aligned}
&\frac{T_s+K_2}{4P_{\fff S}\!(\ln\rho)^2}\!\!\left[\!\mathbf{Ei}(\!-\frac{\kappa_1\mu_2}{P_{\fff R}}\!)\!-\!\mathbf{Ei}(\!-\frac{\kappa_1\mu_1}{P_{\fff R}}\!)\!\right]\!\!\!\left[\! e^{\!-\!\frac{\kappa_1\mu_1}{P_{\fff S}}}\!\!-\!e^{\!-\!\frac{\kappa_1\mu_2}{P_{\fff S}}}\!\right]\!\!-\!\!\frac{\mu_2}{P_{\fff S}^2}\!K_1\!\epsilon\\
&\quad\!\!-\!\!K_1\!e^{-\frac{\mu_2}{P_{\fff S}}}\!\!\left[\mathbf{Ei}(\!\frac{\mu_1}{P_{\fff S}}\!)-\mathbf{Ei}(\!\frac{\mu_2}{P_{\fff S}}\!)\right]\!\!\frac{1}{P_{\fff S}}\!\!\left[1-e^{-\frac{(1-\rho^2)\mu_1}{P_{\fff S}}}\right]\!\!-\!\!K_3\!=\!0.
\label{kkt_ps}
\end{aligned}
\end{equation}
Following a similar analysis as above, we can see that the optimal solution of (53) also satisfy the following equations
 \begin{equation}
\begin{aligned}
&\frac{T_s+K_2}{4P_{\fff R}\!(\ln\rho)^2}\!\!\left[\!\mathbf{Ei}(\!-\frac{\kappa_1\!\mu_2}{P_{\fff S}}\!)\!\!-\!\mathbf{Ei}(\!-\frac{\kappa_1\!\mu_1}{P_{\fff S}}\!)\!\right]\!\!\!\left[\! e^{\!-\!\frac{\kappa_1\!\mu_1}{P_{\fff R}}}\!\!-\!e^{\!-\!\frac{\kappa_1\!\mu_2}{P_{\fff R}}}\!\right]\!\!-\!\!\frac{\mu_2}{P_{\fff R}^2}\!K_1\!\epsilon\\
&\quad\!\!-\!\!K_1\!e^{-\frac{\mu_2}{P_{\fff R}}}\!\!\left[\mathbf{Ei}(\!\frac{\mu_1}{P_{\fff R}}\!)\!-\!\mathbf{Ei}(\!\frac{\mu_2}{P_{\fff R}}\!)\right]\!\!\frac{1}{P_{\fff R}}\!\!\left[1\!-\!e^{-\!\frac{(\!1\!-\rho^2\!)\mu_1}{P_{\fff R}}}\right]\!\!-\!\!K_4\!=\!0,
\label{kkt_pr}
\end{aligned}
\end{equation}
\begin{equation}
\begin{aligned}
&T_s(1-\delta)+\frac{T_s-K_2}{4(\ln\rho)^2}\left[A_1\left[\mathbf{Ei}(-\frac{\kappa_1\mu_2}{P_{\fff R}})-\mathbf{Ei}(-\frac{\kappa_1\mu_1}{P_{\fff R}})\right]\right.\\
&\left.\quad+A_2\left[\mathbf{Ei}(-\frac{\kappa_1\mu_2}{P_{\fff S}})-\mathbf{Ei}(-\frac{\kappa_1\mu_1}{P_{\fff S}})\right]\right]=0,
\label{kkt_ts}
\end{aligned}
\end{equation}
where $A_1=\frac{2^{2T_s+1}\log2}{\kappa_1}e^{-\frac{\kappa_1\mu_2}{P_{\fff S}}}-e^{-\frac{\kappa_1\mu_1}{P_{\fff S}}}$, $A_2=\frac{2^{2T_s+1}\log2}{\kappa_1}e^{-\frac{\kappa_1\mu_2}{P_{\fff R}}}-e^{-\frac{\kappa_1\mu_1}{P_{\fff R}}}$. Based on equations (\ref{kkt_ps})-(\ref{kkt_ts}), we can then get the optimal solution ($P_{\fff S}$, $P_{\fff R}$, $T_s$) of (53). By substituting ($P_{\fff S}$, $P_{\fff R}$, $T_s$) into (\ref{d_eta}), we can obtain the maximum covert throughput $\eta_s^*$ in the single antenna scenario.

\subsection{Covert Throughput Maximization in Multi-antenna Scenario}
 From \textbf{Theorem 3} we can see that the covert throughput in the multi-antenna scenario is related to $P_{\fff S}$, $P_{\fff R}$ and ${T_m}$. Here, we determine the optimal settings of these parameters to achieve the maximum covert throughput $\eta_m^*$ in this scenario. The related optimization problem can be formulated as follows.
 \begin{subequations}
 \begin{align}
&\max_{P_{\fff S}, P_{\fff R}, T_{m}}\eta_{m}  \label{mul_eta_a}\\
& \quad s.t. \quad \xi ^*\geq 1- \epsilon ,   \label{mul_eta_b}  \\
 &\!\!\!\qquad\qquad  p_{out}^{(m)} \leq  \delta ,  \label{mul_eta_c}  \\
&\qquad\quad\  P_{\fff S} \leq P_{ max },\label{mul_eta_d}  \\
&\qquad\quad\  P_{\fff R} \leq P_{ max }.\label{mul_eta_e}  
 \end{align}
 \end{subequations}
 
Since the objective function in (\ref{mul_eta_a}) and the reliability constraint in (\ref{mul_eta_c}) are too complex to prove their convexity (if it is not possible), we apply the classic direct search algorithm in \cite{hooke1961direct} to devise the following Algorithm 1 for solving the optimization problem in (61).

\begin{algorithm}
    \caption{Covert throughput maximization in multi-antenna scenario}
    \label{alg:2} 
    \begin{algorithmic}[1] 
     \REQUIRE Maximum transmit power constraint $P_{max}$, Covertness constraint $ \epsilon$, Reliability constraint $ \delta $;
        \ENSURE Maximum covert throughput $ \eta_{m}^*$;
        \STATE Initialize $P_{\fff S}=0$, $P_{\fff R}=0$, $T_m=0$;
        \STATE Set the step size $h_1$ for $P_{\fff S}$, $h_2$ for $P_{\fff R}$, and $h_3$ for $T_m$; the convergence tolerance $\phi>0$; the iteration index $v=0$, and the maximum number of iterations $v_{max}$;
         \WHILE{$P_{\fff S} \leq P_{max}$}
           \WHILE{$P_{\fff R} \leq P_{max}$}
          \STATE  Calculate $\xi^*$ according to (\ref{two-hop dep});
          \IF{$\xi^* \geq 1-\epsilon$}
          \STATE According to $p_{out}^{(m,1)}$ and $p_{out}^{(m,2)}$, update the value of $p_{out}^{(m)}$; 
           \WHILE{$p_{out}^{(m)}\leq \delta$} 
       \STATE $v=v+1$;
       \STATE According  to (\ref{mul_eta_a}), search the maximum value $\eta_m^{max}$ of $\eta_m$; 
       \STATE \textbf{break} \textbf{if} $\mid \eta_m^*-\eta_m^{max}\mid \leq \phi$ or $v \geq v_{max}$;
        \IF{$\eta_m^{max}>\eta_m^*$}
        \STATE $\eta_m^*=\eta_m^{max}$;
        \ENDIF
         \STATE $T_m=T_m+h_3$;
         \ENDWHILE
           \ENDIF
         \STATE $P_{\fff R}=P_{\fff R}+h_2$;
         \ENDWHILE
         \STATE $P_{\fff S}=P_{\fff S}+h_1$;
         \ENDWHILE
     \RETURN $\eta_{m}^*$;
    \end{algorithmic}
\end{algorithm}

 \section{Numerical Results}
 In this section, we provide extensive numerical results to validate our theoretical models as well as to illustrate the impacts of system parameters on system performance and the covert communication performance enhancement from adopting multi-antenna technique. For simplicity, in the following numerical results, we set $N_{\fff S}=N_{\fff R_{t}}=N_t$ and $N_{\fff R_{r}}=N_{\fff D}=N_r$ in the multi-antenna scenario.
 
 \subsection{DEP Performance}
To validate our theoretical model on minimum DEP $\xi^*$, we conducted related simulations and summarize in Fig. \ref{p_dep} both the simulation results and theoretical ones for $\xi^*$ vs. $P_{\fff S}=P_{\fff R}=P$ under the settings of $\rho=1.5$, $\sigma_n^2=\{-10,-5,3\}$dBm. Fig. \ref{p_dep} shows the theoretical results match nicely with the simulated ones, indicating the theoretical model in (\ref{two-hop dep}) can accurately depict the DEP performance in the concerned system. Fig. \ref{p_dep} also shows that when $P_{\fff S}=P_{\fff R}=P$, $\xi^*$ decreases monotonously as $P$ increases. This is because a larger $P$ makes it easier for warden to detect the transmission process and leads to a smaller DEP. We can also see from Fig. \ref{p_dep} that for a given setting of $P$, $\xi^*$ increases as the nominal noise  $\sigma_n^2$ increases. This is mainly due to the reason that a larger $\sigma_n^2$ will make it harder for warden to detect the presence of the signal.
 \begin{figure*}
  \begin{minipage}[b]{.3\textwidth}
    \centering
    \includegraphics[width=6.5cm]{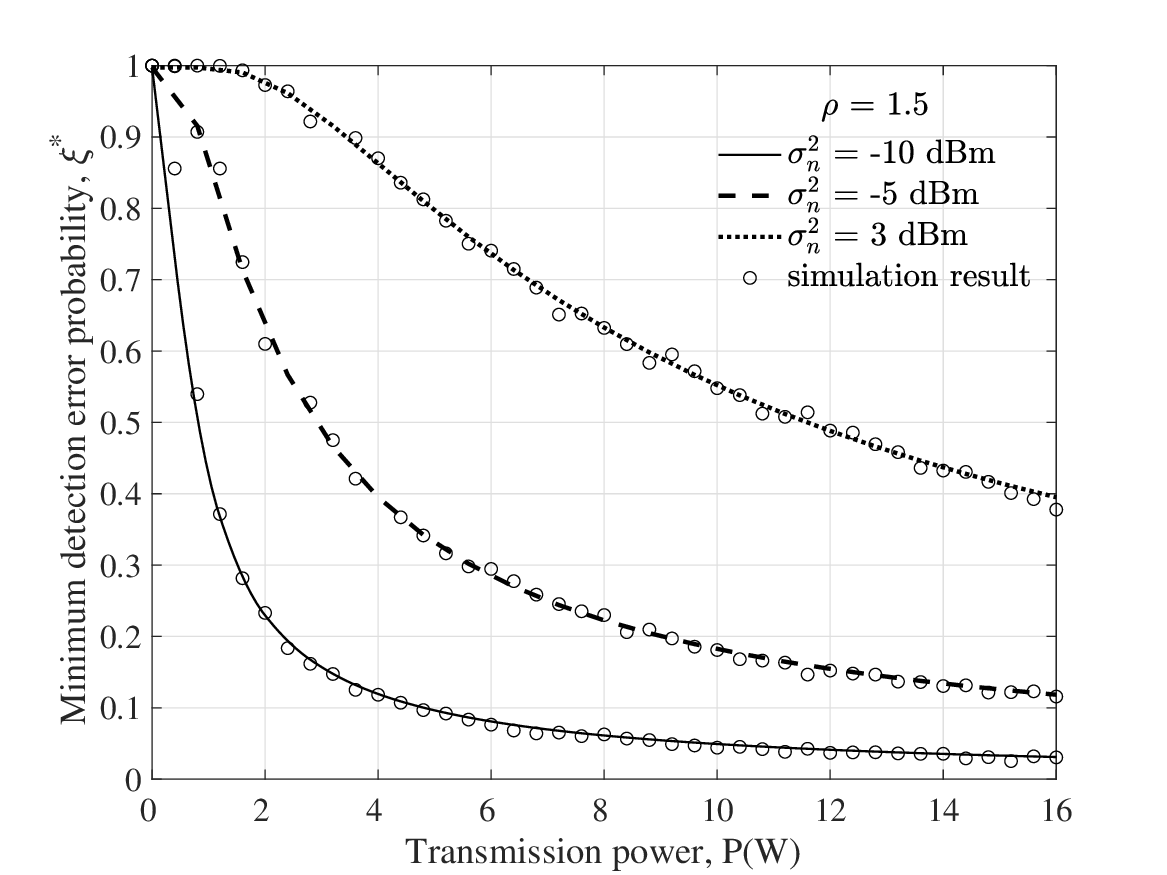}
    \caption{Minimum DEP $\xi^*$ vs. transmission power $P$.}\label{p_dep}
  \end{minipage}\hfill
  \begin{minipage}[b]{.3\textwidth}
    \centering
    \includegraphics[width=6.5cm]{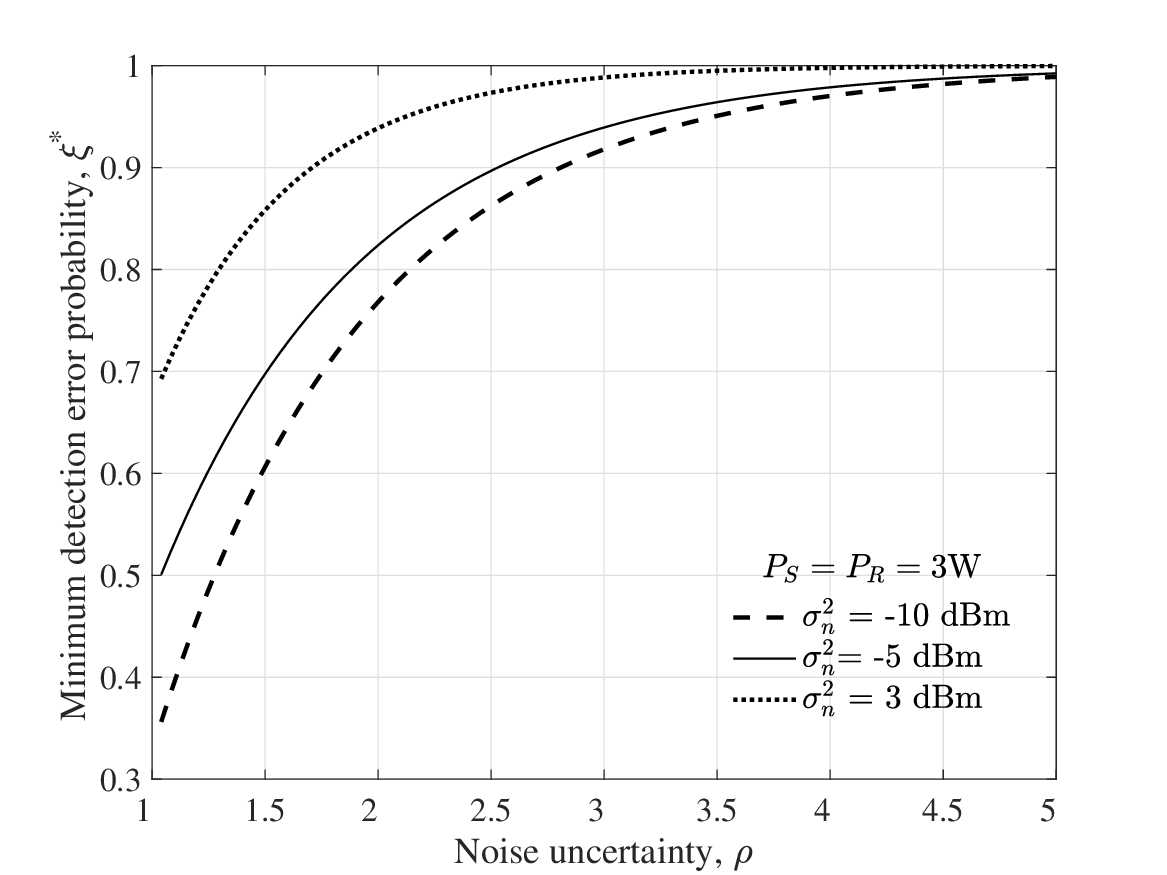}
    \caption{Minimum DEP $\xi^*$ vs. noise uncertainty $\rho$.}\label{n_dep}
  \end{minipage}\hfill
  \begin{minipage}[b]{.3\textwidth}
    \centering
    \includegraphics[width=6.5cm]{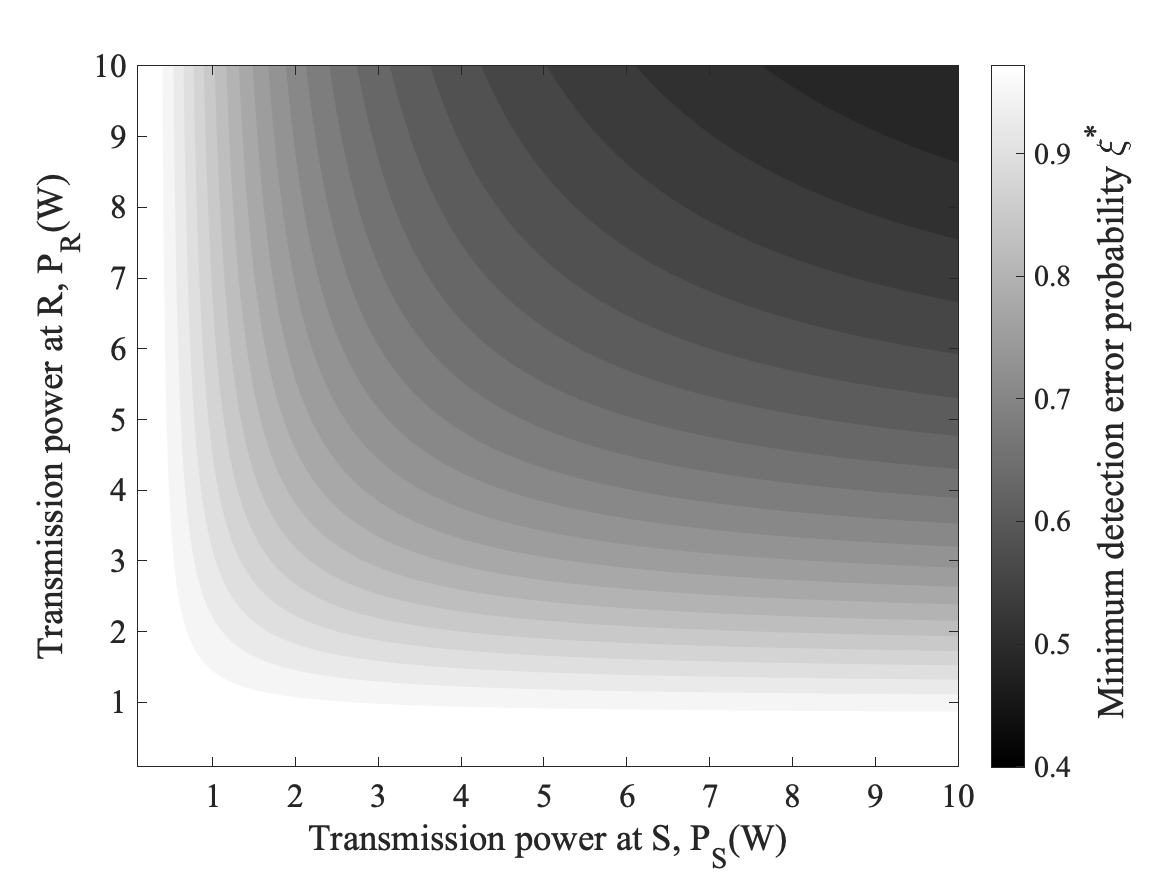}
    \caption{Minimum DEP $\xi^*$ vs. $P_{\fff S}$ and $P_{\fff R}$. $\sigma_n^2=-5$dBm, $\rho = 1.5$.}\label{compare}
  \end{minipage}
\end{figure*}

To explore the impact of noise uncertainty $ \rho $ on $ \xi^*$, we summarize in Fig. \ref{n_dep} $ \xi^*$ vs. $ \rho $ with the settings of $P_{\fff S}=P_{\fff R}=3$W, $\sigma_n^2=\{-10, -5, 3\}$dBm. As shown in Fig. \ref{n_dep} that $ \xi^*$ increases monotonously as $ \rho $ increases. This is because when $P_{\fff S}$, $P_{\fff R}$, and $\sigma_n^2$ are fixed, increasing $ \rho$ will be helpful for hiding the existence of covert communication and thus for degrading the detection performance of the warden. Fig. \ref{n_dep} also shows that even as $\rho$ tends to 1, $\xi^*$ will not reduce zero. This is because although no noise uncertainty will diminish as $\rho$ tends 1, the nominal noise can still help to hide the signal transmission. The above observation indicates that the noise uncertainty and nominal noise can jointly affect the warden's detection performance.


We further show in Fig. \ref{compare} how $\xi^*$ varies when $P_{\fff S}$ and $P_{\fff R}$ change independently under the settings of $\sigma_n^2=-5$dBm and $\rho=1.5$. As dipicted in Fig. \ref{compare} a smaller $\xi^*$ can be achieved only when both $P_{\fff S}$ and $P_{\fff R}$ becomes larger, while a smaller $P_{\fff S}$ and/or $P_{\fff R}$ will always lead to a larger $\xi^*$. This is because we can see from (\ref{2-dep}) that in the concerned two-hop relay system, warden can successfully detect the end to end covert communication only when he detects the covert transmissions in both hops.

\subsection{Covert Throughput Performance}

To verify our theoretical models on covert throughput, we show in Fig. \ref{compare_2} the simulation results and theoretical ones on $\eta$ (i.e., $\eta_s$ and $\eta_m$) vs. target rate $T$ under both single and multi-antenna scenarios, where $\epsilon = 0.15$, $N_t=2$, $N_r=8$. The results in Fig. \ref{compare_2} show that the theoretical results agree with the corresponding simulation results, so our theoretical models in (\ref{eta_s}) and (\ref{mul-eta}) can efficiently capture the behaviors of the covert throughput performance in the concerned system. We can observe from Fig. \ref{compare_2}  that as $T$ increases, both $\eta_s$ and $\eta_m$ first increase and then decrease. This is due to the reason that the effects of $T$ on $\eta$ are two-fold. On the one hand, as $T$ increases, more information will be transmitted per unit time, leading to a higher $\eta$. On the other hand, as $T$ increases further, the receiver can not conduct decoding timely, resulting in more transmission outage events and thus a decrease of $\eta$. The Fig. \ref{compare_2} also indicates clearly that an optimal setting of $T$ exists to achieve the maximum covert throughout in either single or multi-antenna scenarios.

\begin{figure}[t!]
\centering 
\includegraphics[width=0.9\linewidth]{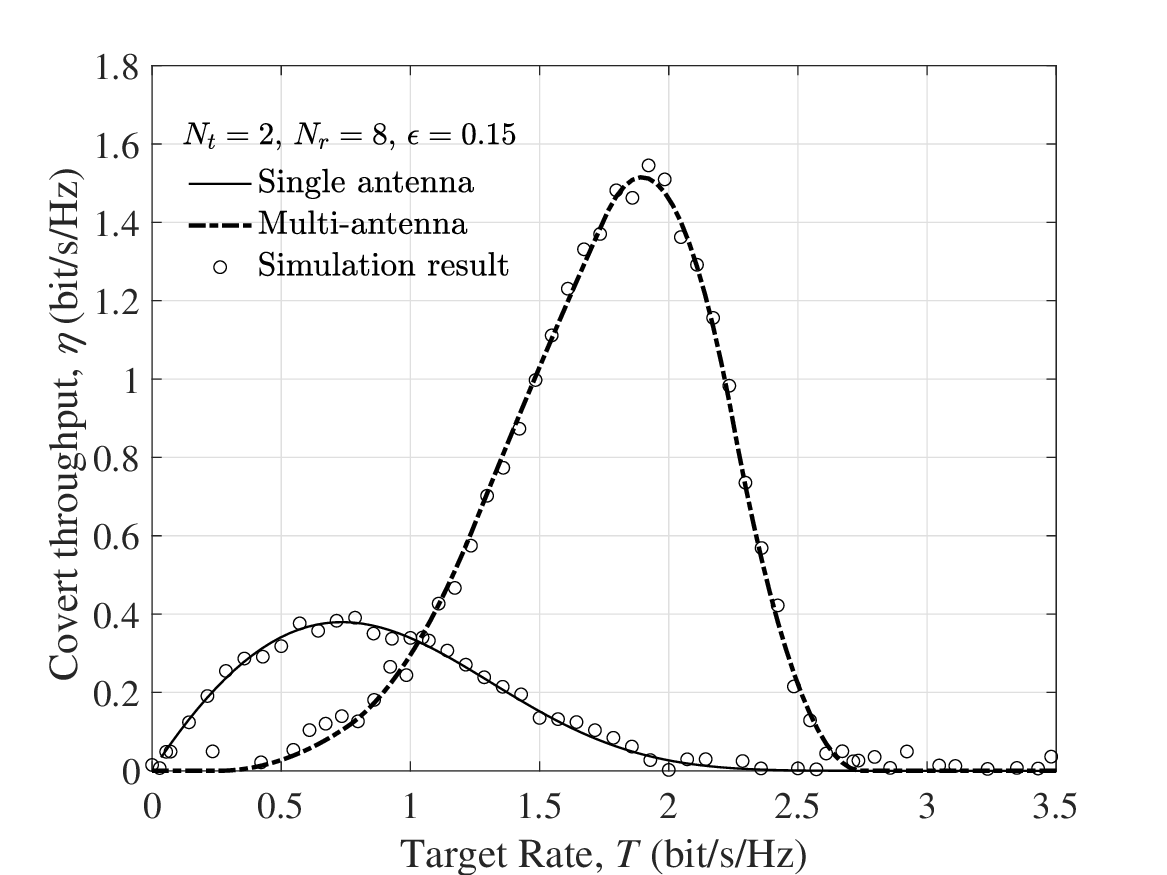}
		\caption{Covert throughput $\eta$ vs. target rate $T$.}\label{compare_2}
\end{figure}

\begin{figure}[t!]
\centering 
\includegraphics[width=9cm]{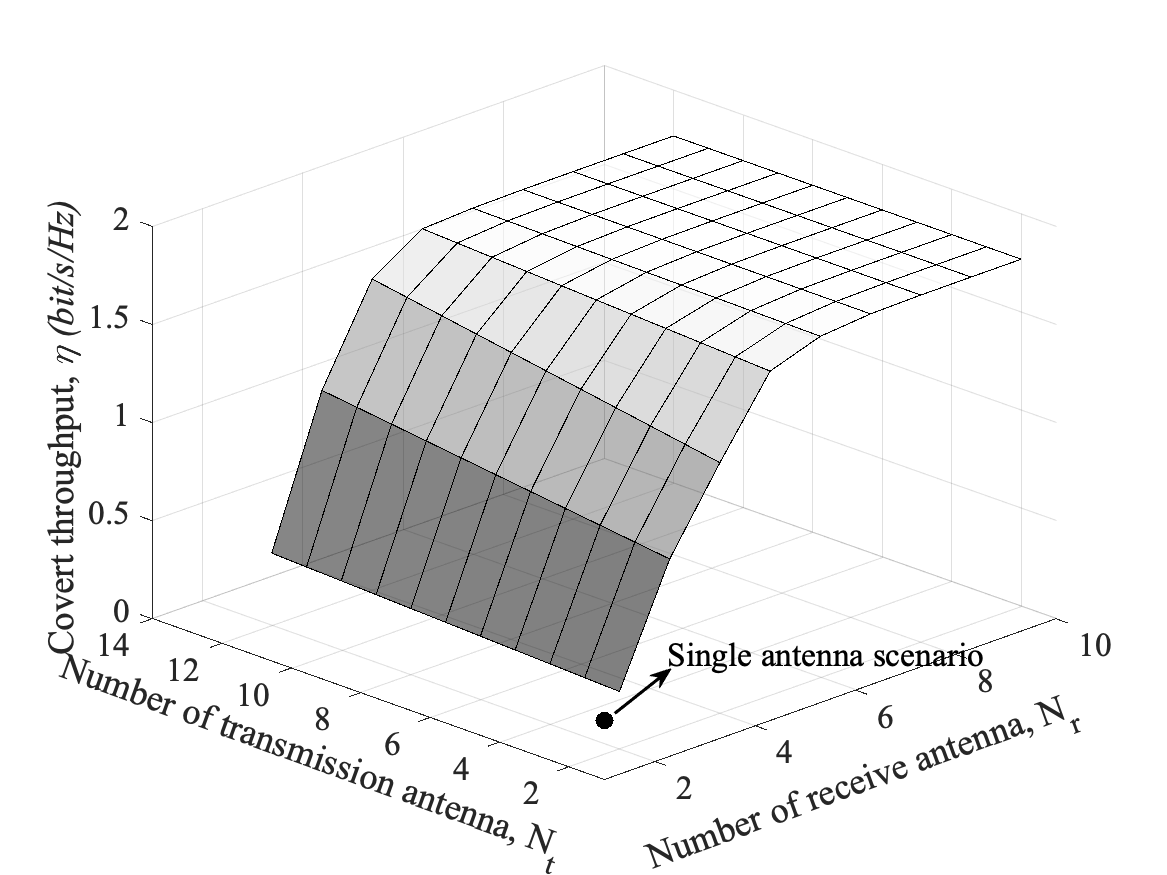}
\caption{Covert throughput $\eta$ vs. number of antennas, $(N_t, N_r)$. $\rho = 1.5$, $\sigma_n^2=-5$dBm, $T_s = T_m = 1.5$bit/s/Hz.}\label{ante}
\end{figure}
To demonstrate the covert throughput enhancement from adopting multiple antennas, we show in Fig. \ref{ante} how $\eta_m$ varies with ($N_t$, $N_r$) under the settings of $\rho = 1.5$, $\sigma_n^2=-5$dBm, $T_s = T_m = 1.5$bit/s/Hz. We can see from Fig. \ref{ante} that compared with the single antenna scenario (i.e., when $N_t=N_r=1$), although increasing $N_t$ and/or $N_r$ can in general lead to an improvement in covert throughput, such improvement is more sensitive to the variation of $N_r$ rather than $N_t$. The reason for this phenomenon can be summarized as following. First, in the concerned system, a transmitter (either $S$ or $R$) adopts the TAS scheme to choose only the antenna with the largest SNR as the transmitting antenna, so adopting more antennas at the transmitter has limited effect on the improvement of covert throughput. Second, a receiver (either $R$ or $D$) adopts the MRC scheme to yield the maximum SNR ratio of the received signal by weighting the signals from all receive antennas, so employing more antennas at the receiver will result in a larger SNR ratio during signal transmission and leads to a higher covert throughput. However, due to the limitations of total bandwidth and transmit power, the covert throughput will no longer increases and remains constant if we increase $N_r$ further beyond a threshold.

We then illustrate in Fig. \ref{compare_3} and Fig. \ref{mum_op} how covert throughput varies with the covertness constraint and noise uncertainty. The results in these two figures indicate clearly that the covert throughput can be significantly improved by employing multiple antennas, and such improvement can be further enhanced by applying the covert throughput maximization. For example, as depicted in Fig. \ref{compare_3} that when $\epsilon = 0.15$, the covert throughput in the single-antenna scenario is $\eta_s = 0.355 $bit/s/Hz, while the corresponding maximal covert throughput is $\eta_s^* = 0.387 $bit/s/Hz, and the covert throughput then improves to $\eta_m = 1.359 $bit/s/Hz by employing multiple antennas of ($N_t$ = 2, $N_r$ = 8) and further to $\eta_m^* = 1.563$bit/s/Hz with the help of covert throughput maximization. Fig. \ref{compare_3} shows that as $\epsilon$ increases (i.e., as covertness constraint becomes less stringent), $\eta$ first increase and then keep constant. This is because as $\epsilon$ increases, the transmitter can adopt a larger power to transmit the signals, leading to a larger $\eta$. However, the transmit power is limited by the transmit power constraint $P_{max}$, so $\eta$ keep unchanged when $\epsilon$ increases further. The results in Fig. \ref{mum_op} indicate that as $\rho$ increases (i.e., as the noise uncertainty becomes more intensive), $\eta$ first increase and then decrease. This can be explained as follows. We can see from (\ref{two-hop dep}), (\ref{pout_40}) and (\ref{pout_m}) that $\rho$ is related to both the DEP and outage probability, so the effect of $\rho$ on $\eta$ are two folds. Although increasing $\rho$ is helpful for hiding the covert communication and thus for increasing $\eta$, but a too large $\rho$ will result in a high outage probability and a decrease of $\eta$. Therefore, noise uncertainty should be properly controlled to achieve the optimal covert throughput performance.

\begin{figure}[t!]
\centering 
\includegraphics[width=9cm]{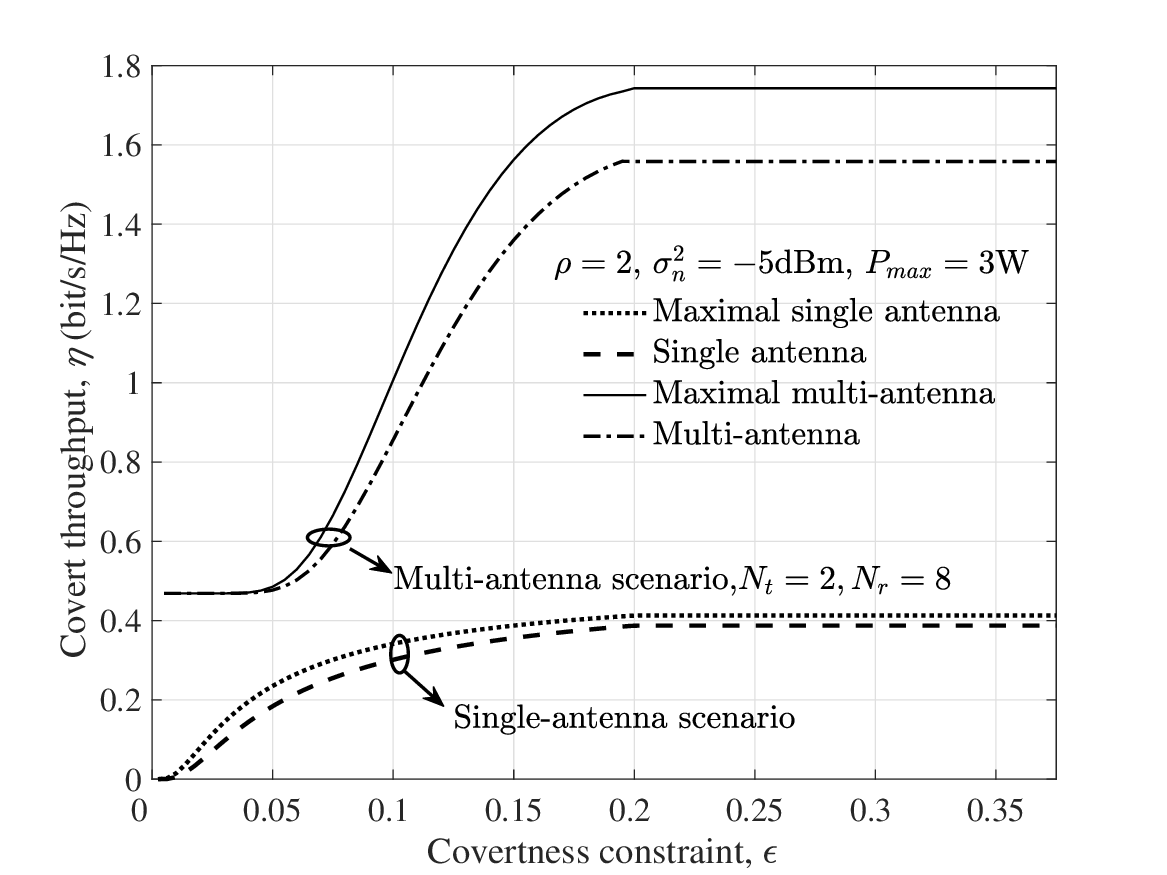}
\caption{Covert throughput $\eta$ vs. covertness constraint $\epsilon$.}\label{compare_3}
\end{figure}

\begin{figure}[t]
\centering 
\includegraphics[width=9cm, height=6.5cm]{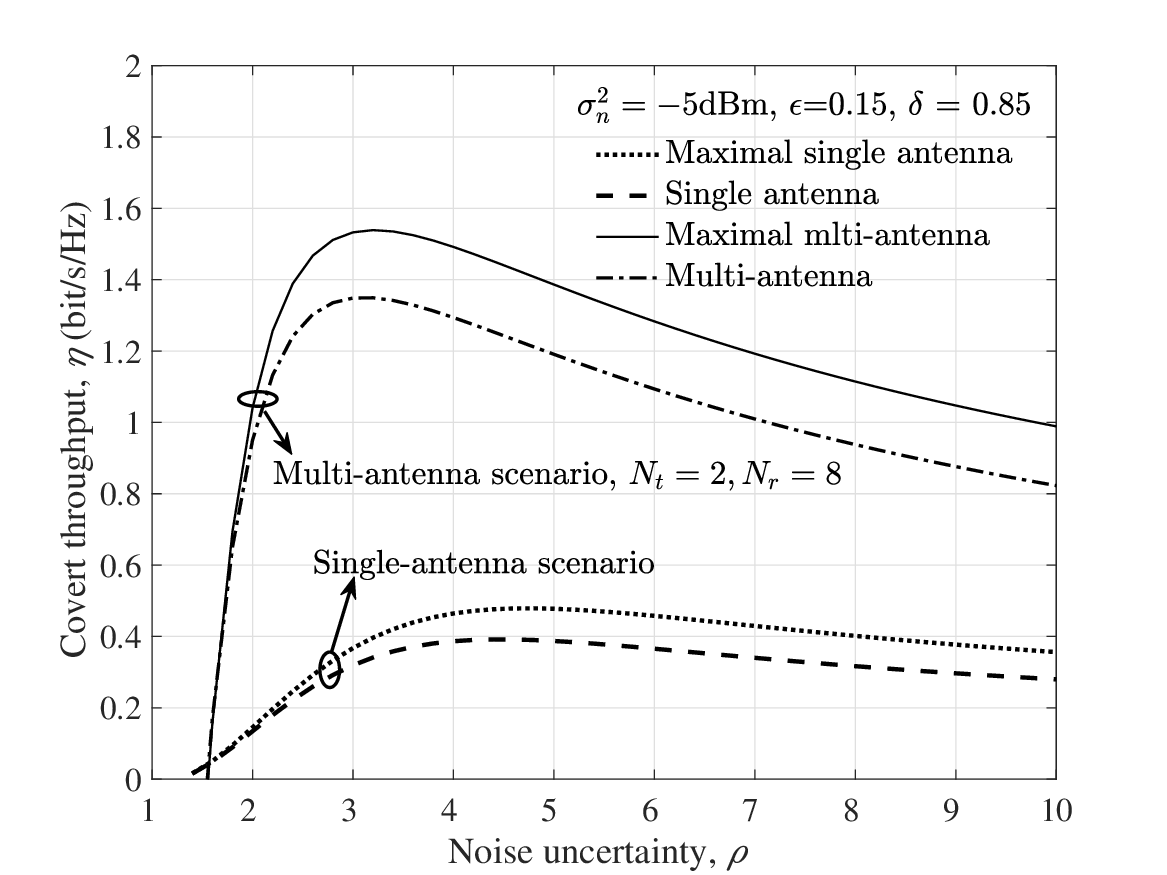}
\caption{Covert throughput $\eta$ vs. noise uncertainty $\rho$.}\label{mum_op}
\end{figure}

\section{Conclusion}
This paper developed related theoretical frameworks to reveal the impact of employing multi-antenna technique on covert communication performance in a two-hop relay system. Our results indicate that adopting multiple antennas has a significant impact on covert performance in relay systems. In particular, even under the strict constraints of covertness, reliability and transmit power, in comparison with the simple single-antenna scenario, applying the multi-antenna technique can still lead a great improvement in covert throughput, and such improvement in general increases as the number of antennas increases. It is expected that this work can shed light on the covert communication enhancement in wireless relay systems.

\vspace*{1\baselineskip}
\bibliographystyle{IEEEtran}
\bibliography{refe.bib}

\begin{thebibliography}{10}
\providecommand{\url}[1]{#1}
\csname url@samestyle\endcsname
\providecommand{\newblock}{\relax}
\providecommand{\bibinfo}[2]{#2}
\providecommand{\BIBentrySTDinterwordspacing}{\spaceskip=0pt\relax}
\providecommand{\BIBentryALTinterwordstretchfactor}{4}
\providecommand{\BIBentryALTinterwordspacing}{\spaceskip=\fontdimen2\font plus
\BIBentryALTinterwordstretchfactor\fontdimen3\font minus
  \fontdimen4\font\relax}
\providecommand{\BIBforeignlanguage}[2]{{%
\expandafter\ifx\csname l@#1\endcsname\relax
\typeout{** WARNING: IEEEtran.bst: No hyphenation pattern has been}%
\typeout{** loaded for the language `#1'. Using the pattern for}%
\typeout{** the default language instead.}%
\else
\language=\csname l@#1\endcsname
\fi
#2}}
\providecommand{\BIBdecl}{\relax}
\BIBdecl

\bibitem{makhdoom2022comprehensive}
I.~Makhdoom, M.~Abolhasan, and J.~Lipman, ``A comprehensive survey of covert
  communication techniques, limitations and future challenges,''
  \emph{Computers \& Security}, vol. 120, p. 102784, Sep. 2022.

\bibitem{fu2019modulation}
Y.~Fu, S.~Guo, and Z.~Yu, ``The modulation technology of chaotic multi-tone and
  its application in covert communication system,'' \emph{IEEE Access}, vol.~7,
  pp. 122\,289--122\,301, Aug. 2019.

\bibitem{wu2020covert}
H.~Wu, Y.~Zhang, X.~Liao, Y.~Shen, and X.~Jiang, ``On covert throughput
  performance of two-way relay covert wireless communications,'' \emph{Wireless
  Networks}, vol.~26, pp. 3275--3289, Jan. 2020.

\bibitem{hu2017covert}
J.~Hu, S.~Yan, X.~Zhou, F.~Shu, and J.~Wang, ``Covert communication in wireless
  relay networks,'' in \emph{GLOBECOM 2017-2017 IEEE Global Communications
  Conference}.\hskip 1em plus 0.5em minus 0.4em\relax IEEE, Apr. 2017, pp.
  1--6.

\bibitem{hu2018covert}
J.~Hu, S.~Yan, X.~Zhou, F.~Shu, J.~Li, and J.~Wang, ``Covert communication
  achieved by a greedy relay in wireless networks,'' \emph{IEEE Transactions on
  Wireless Communications}, vol.~17, no.~7, pp. 4766--4779, Jul. 2018.

\bibitem{forouzesh2020covert}
M.~Forouzesh, P.~Azmi, A.~Kuhestani, and P.~L. Yeoh, ``Covert communication and
  secure transmission over untrusted relaying networks in the presence of
  multiple wardens,'' \emph{IEEE Transactions on Communications}, vol.~68,
  no.~6, pp. 3737--3749, Jun. 2020.

\bibitem{gao2021covert}
C.~Gao, B.~Yang, X.~Jiang, H.~Inamura, and M.~Fukushi, ``Covert communication
  in relay-assisted {IoT} systems,'' \emph{IEEE Internet of Things Journal},
  vol.~8, no.~8, pp. 6313--6323, Apr. 2021.

\bibitem{su2020covert}
Y.~Su, H.~Sun, Z.~Zhang, Z.~Lian, Z.~Xie, and Y.~Wang, ``Covert communication
  with relay selection,'' \emph{IEEE Wireless Communications Letters}, vol.~10,
  no.~2, pp. 421--425, Feb. 2021.

\bibitem{sun2021covert}
R.~Sun, B.~Yang, S.~Ma, Y.~Shen, and X.~Jiang, ``Covert rate maximization in
  wireless full-duplex relaying systems with power control,'' \emph{IEEE
  Transactions on Communications}, vol.~69, no.~9, pp. 6198--6212, Sep. 2021.

\bibitem{wang2020covert}
Y.~Wang, S.~Yan, W.~Yang, and Y.~Cai, ``Covert communications with constrained
  age of information,'' \emph{IEEE Wireless Communications Letters}, vol.~10,
  no.~2, pp. 368--372, Feb. 2021.

\bibitem{hu2019covert}
J.~Hu, S.~Yan, X.~Zhou, F.~Shu, and J.~Li, ``Covert wireless communications
  with channel inversion power control in {Rayleigh} fading,'' \emph{IEEE
  Transactions on Vehicular Technology}, vol.~68, no.~12, pp. 12\,135--12\,149,
  Dec. 2019.

\bibitem{shahzad2020covert}
K.~Shahzad and X.~Zhou, ``Covert wireless communications under quasi-static
  fading with channel uncertainty,'' \emph{IEEE Transactions on Information
  Forensics and Security}, vol.~16, pp. 1104--1116, Oct. 2020.

\bibitem{ta2019covert}
H.~Q. Ta and S.~W. Kim, ``Covert communication under channel uncertainty and
  noise uncertainty,'' in \emph{ICC 2019-2019 IEEE International Conference on
  Communications (ICC)}.\hskip 1em plus 0.5em minus 0.4em\relax IEEE, May 2019,
  pp. 1--6.

\bibitem{jiang2021covert}
X.~Jiang, X.~Chen, J.~Tang, N.~Zhao, X.~Y. Zhang, D.~Niyato, and K.-K. Wong,
  ``Covert communication in {UAV}-assisted air-ground networks,'' \emph{IEEE
  Wireless Communications}, vol.~28, no.~4, pp. 190--197, Aug. 2021.

\bibitem{zhang2022uav}
R.~Zhang, X.~Chen, M.~Liu, N.~Zhao, X.~Wang, and A.~Nallanathan, ``{UAV} relay
  assisted cooperative jamming for covert communications over {Rician}
  fading,'' \emph{IEEE Transactions on Vehicular Technology}, vol.~71, no.~7,
  pp. 7936--7941, Jul. 2022.

\bibitem{chen2021enhancing}
X.~Chen, T.-X. Zheng, L.~Dong, M.~Lin, and J.~Yuan, ``Enhancing {MIMO} covert
  communications via intelligent reflecting surface,'' \emph{IEEE Wireless
  Communications Letters}, vol.~11, no.~1, pp. 33--37, Jan. 2021.

\bibitem{zheng2019multi}
T.-X. Zheng, H.-M. Wang, D.~W.~K. Ng, and J.~Yuan, ``Multi-antenna covert
  communications in random wireless networks,'' \emph{IEEE Transactions on
  Wireless Communications}, vol.~18, no.~3, pp. 1974--1987, Mar. 2019.

\bibitem{yang2019achieving}
L.~Yang, W.~Yang, S.~Xu, L.~Tang, and Z.~He, ``Achieving covert wireless
  communications using a full-duplex multi-antenna receiver,'' in \emph{2019
  IEEE 5th International Conference on Computer and Communications
  (ICCC)}.\hskip 1em plus 0.5em minus 0.4em\relax IEEE, Dec. 2019, pp.
  912--916.

\bibitem{chen2021multi}
X.~Chen, W.~Sun, C.~Xing, N.~Zhao, Y.~Chen, F.~R. Yu, and A.~Nallanathan,
  ``Multi-antenna covert communication via full-duplex jamming against a warden
  with uncertain locations,'' \emph{IEEE Transactions on Wireless
  Communications}, vol.~20, no.~8, pp. 5467--5480, Mar. 2021.

\bibitem{lin2021multi}
Y.~Lin, L.~Jin, K.~Huang, F.~Wang, and J.~Yang, ``Multi-antenna covert
  communications with delay constraint,'' in \emph{2021 IEEE 94th Vehicular
  Technology Conference (VTC2021-Fall)}.\hskip 1em plus 0.5em minus 0.4em\relax
  IEEE, 2021, pp. 01--05.

\bibitem{chen2005analysis}
Z.~Chen, J.~Yuan, and B.~Vucetic, ``Analysis of transmit antenna
  selection/maximal-ratio combining in {Rayleigh} fading channels,'' \emph{IEEE
  Transactions on Vehicular Technology}, vol.~54, no.~4, pp. 1312--1321, Jul.
  2005.

\bibitem{shahzad2017covert}
K.~Shahzad, X.~Zhou, and S.~Yan, ``Covert communication in fading channels
  under channel uncertainty,'' in \emph{2017 IEEE 85th Vehicular Technology
  Conference (VTC Spring)}.\hskip 1em plus 0.5em minus 0.4em\relax IEEE, Jun.
  2017, pp. 1--5.

\bibitem{he2017covert}
B.~He, S.~Yan, X.~Zhou, and V.~K. Lau, ``On covert communication with noise
  uncertainty,'' \emph{IEEE Communications Letters}, vol.~21, no.~4, pp.
  941--944, Apr. 2017.

\bibitem{bash2013limits}
B.~A. Bash, D.~Goeckel, and D.~Towsley, ``Limits of reliable communication with
  low probability of detection on {AWGN} channels,'' \emph{IEEE journal on
  selected areas in communications}, vol.~31, no.~9, pp. 1921--1930, Aug. 2013.

\bibitem{he2018covert}
B.~He, S.~Yan, X.~Zhou, and H.~Jafarkhani, ``Covert wireless communication with
  a poisson field of interferers,'' \emph{IEEE Transactions on Wireless
  Communications}, vol.~17, no.~9, pp. 6005--6017, Sep. 2018.

\bibitem{molisch2005capacity}
A.~F. Molisch, M.~Z. Win, Y.-S. Choi, and J.~H. Winters, ``Capacity of {MIMO}
  systems with antenna selection,'' \emph{IEEE Transactions on Wireless
  Communications}, vol.~4, no.~4, pp. 1759--1772, Jul. 2005.

\bibitem{haykin1988digital}
S.~Haykin, \emph{Digital communications}.\hskip 1em plus 0.5em minus
  0.4em\relax Wiley New York, 1988.

\bibitem{gradshteyn2014table}
I.~S. Gradshteyn and I.~M. Ryzhik, \emph{Table of integrals, series, and
  products}.\hskip 1em plus 0.5em minus 0.4em\relax Academic press, 2014.

\bibitem{aguilera2009convex}
N.~E. Aguilera and P.~Morin, ``On convex functions and the finite element
  method,'' \emph{SIAM Journal on Numerical Analysis}, vol.~47, no.~4, pp.
  3139--3157, 2009.

\bibitem{hooke1961direct}
R.~Hooke and T.~A. Jeeves, ````{D}irect search''solution of numerical and
  statistical problems,'' \emph{Journal of the ACM (JACM)}, vol.~8, no.~2, pp.
  212--229, Aug. 1961.

\end{thebibliography}


\begin{thebibliography}{1}

\bibitem{ams}
{\it{Mathematics into Type}}, American Mathematical Society. Online available: 

\bibitem{oxford}
T.W. Chaundy, P.R. Barrett and C. Batey, {\it{The Printing of Mathematics}}, Oxford University Press. London, 1954.

\bibitem{lacomp}{\it{The \LaTeX Companion}}, by F. Mittelbach and M. Goossens

\bibitem{mmt}{\it{More Math into LaTeX}}, by G. Gr\"atzer

\bibitem{amstyle}{\it{AMS-StyleGuide-online.pdf,}} published by the American Mathematical Society

\bibitem{Sira3}
H. Sira-Ramirez. ``On the sliding mode control of nonlinear systems,'' \textit{Systems \& Control Letters}, vol. 19, pp. 303--312, 1992.

\bibitem{Levant}
A. Levant. ``Exact differentiation of signals with unbounded higher derivatives,''  in \textit{Proceedings of the 45th IEEE Conference on Decision and Control}, San Diego, California, USA, pp. 5585--5590, 2006.

\bibitem{Cedric}
M. Fliess, C. Join, and H. Sira-Ramirez. ``Non-linear estimation is easy,'' \textit{International Journal of Modelling, Identification and Control}, vol. 4, no. 1, pp. 12--27, 2008.

\bibitem{Ortega}
R. Ortega, A. Astolfi, G. Bastin, and H. Rodriguez. ``Stabilization of food-chain systems using a port-controlled Hamiltonian description,'' in \textit{Proceedings of the American Control Conference}, Chicago, Illinois, USA, pp. 2245--2249, 2000.

\end{thebibliography}

\end{document}